\newcommand{\True}{\mathrm{True}}
\newcommand{\CA}{\ensuremath{\mathcal{C}_A}}
\newcommand{\CB}{\ensuremath{\mathcal{C}_B}}
\newcommand{\comp}{\mathrm{comp}}
\newcommand{\ctr}{\mathrm{center}}
\newcommand{\mydriver}{hypertex}
\renewcommand{\mydriver}{pdftex}
\newcommand{\E}{\mathop{\mathbf{E}}}
\newcommand{\p}{\mathbf{p}}
\newcommand{\q}{\mathbf{q}}
\newcommand{\x}{\mathbf{x}}
\newtheorem{theorem}{Theorem}[section]
\newtheorem{lemma}[theorem]{Lemma}
\newtheorem{claim}[theorem]{Claim}
\theoremstyle{definition}
\newtheorem{definition}[theorem]{Definition}
\title{An Optimal Separation between Two Property Testing Models for Bounded Degree Directed Graphs\footnote{Supported in part by NSFC grant 62272431 and ``the Fundamental Research Funds for the Central Universities''.}}
\author{
	Pan Peng\footnote{
	School of Computer Science and Technology, University of Science and Technology of China.  Email: \href{ppeng@ustc.edu.cn}{ppeng@ustc.edu.cn}}
	\and
	Yuyang Wang\footnote{School of Computer Science and Technology, University of Science and Technology of China.  Email: \href{wangyvyang@mail.ustc.edu.cn}{wangyvyang@mail.ustc.edu.cn}}
}
\begin{document}
	\date{}
	\maketitle
\begin{abstract}
	We revisit the relation between two fundamental property testing models for bounded-degree \emph{directed} graphs: the \emph{bidirectional} model in which the algorithms are allowed to query both the outgoing edges and incoming edges of a vertex, and the \emph{unidirectional} model in which only queries to the outgoing edges are allowed. Czumaj, Peng and Sohler [STOC 2016] showed that for directed graphs with both maximum indegree and maximum outdegree upper bounded by $d$, any property that can be tested with query complexity $O_{\varepsilon,d}(1)$ in the bidirectional model can be tested with $n^{1-\Omega_{\varepsilon,d}(1)}$ queries in the unidirectional model. In particular, {if the proximity parameter $\varepsilon$ approaches $0$, then the query complexity of the transformed tester in the unidirectional model approaches $n$}. It was left open if this transformation can be further improved or there exists any property that exhibits such an extreme separation.
	
	We prove that testing \emph{subgraph-freeness} in which the subgraph contains $k$ source components,  requires $\Omega(n^{1-\frac{1}{k}})$ queries in the unidirectional model. This directly gives the first explicit properties that exhibit an $O_{\varepsilon,d}(1)$ vs $\Omega(n^{1-f(\varepsilon,d)})$ separation of the query complexities between the bidirectional model and unidirectional model, where $f(\varepsilon,d)$ is a function that approaches $0$ as $\varepsilon$ approaches $0$. Furthermore, our lower bound also resolves a conjecture by Hellweg and Sohler [ESA 2012] on the query complexity of testing $k$-star-freeness. 
\end{abstract}

\section{Introduction}
Graph property testing is a framework for studying extremely fast (randomized)  algorithms for solving a relaxation of classical decision problems on graphs. Given a graph property $P$, we are interested in designing an algorithm, called a \emph{property tester}, that with high constant probability, accepts any graph $G$ that satisfies $P$,  and rejects any graph that is ``far'' from satisfying $P$, i.e., one needs to modify a significant fraction of the representation (e.g., adjacency matrix or adjacency list) of the graph to make it satisfy $P$. It is assumed that the algorithm is given oracle access to the representation of the graph and the goal of a property tester is to solve the above problem by making as few queries to the oracle as possible. Since the seminal works by Rubinfeld and Sudan \cite{rubinfeld1996robust} (on algebraic property testing) and Goldreich, Goldwasser and Ron \cite{goldreich1998property} (on combinatorial and graph property testing), a lot of efforts have been made on studying which properties can be tested within a sublinear (e.g., constant) number of queries in several classical models, e.g., the dense graph model \cite{goldreich1998property,alon2009combinatorial} and bounded-degree graph model \cite{GoldreichRon2002}. In particular, we have see a rapid development of property testing on \emph{undirected} graphs in the past two decades. We refer to the recent book \cite{goldreich2017introduction} for a survey. 

On the other hand, we still do not know much about property testing in \emph{directed graphs (digraphs)} so far. Bender and Ron \cite{bender2002testing} introduced two fundamental models for studying %
directed graph property testing. The first is called \emph{bidirectional} model, where the algorithm is allowed to query both outgoing and incoming edges of a vertex; the second is called \emph{unidirectional} model, where the algorithm is only allowed to query the outgoing edges, while not incoming edges. The latter model seems more realistic for some applications. For example, consider the webgraphs. It is much easier to query the outgoing edges $(u,v)$ (which corresponds to a hyperlink from webpage $u$ to webpage $v$) than querying the incoming edges. In this paper, we focus on bounded-degree directed graphs. A digraph $G$ is said to be \emph{$d$-bounded}, if both the maximum outdegree and maximum indegree of $G$ are upper bounded by $d$, which is assumed to be a constant. 

Bender and Ron gave an algorithm for testing strong connectivity with $\tilde{O}(1/\varepsilon)$ queries in the bidirectional model, and showed that there is a lower bound of $\Omega(\sqrt{n})$ queries for any algorithm with two-sided error\footnote{A tester for a property $P$ is said to have \emph{one-sided error} if it accepts every (di)graph satisfying $P$, and it errs if the graph is far from having $P$. It is said to have \emph{two-sided error} if it errs in both cases.} in the unidirectional model. Goldreich \cite{goldreich2010introduction}, and Hellweg and Sohler \cite{hellweg2012property} gave a lower bound of $\Omega(n)$ queries for testing strong connectivity with one-sided error in the unidirectional model. The works \cite{goldreich2010introduction,hellweg2012property} also gave testers for strong connectivity with $n^{1-1/(d+1/\varepsilon)}$ queries with two-sided error in the unidirectional model. In \cite{hellweg2012property}, the authors gave testers for subgraph-freeness with $O(n^{1-\frac{1}{k}})$ queries in the unidirectional model, where $k$ is the number of connected components in the subgraph that have no incoming edges. It is known that a few properties can be tested with a constant number of queries in the bidirectional model, including Eulerianity \cite{orenstein2011testing}, $k$-edge connectivity \cite{orenstein2011testing,yoshida2010testing,forster2020computing}, $k$-vertex connectivity \cite{orenstein2011testing,forster2020computing}.

Towards a deeper understanding of testing properties of bounded degree
directed graphs (digraphs), Czumaj, Peng and Sohler \cite{czumaj2016relating} studied the relation between these two models and provided a generic transformation that
converts testers with constant query complexity in the bidirectional model, to testers with
sublinear query complexity in the unidirectional model. Specifically, in \cite{czumaj2016relating}, it was shown that any property $P$ that can be tested with\footnote{Throughout the paper, we use the notation $O_{\varepsilon,d}()$ (resp. $\Omega_{\varepsilon,d}()$) to describe a function in the Big-O (resp. Big-Omega) notation assuming that $\varepsilon$ and $d$ are constant.} $q=O_{\varepsilon,d}(1)$ queries in the bidirectional model can be tested with $n^{1-d^{-\Theta(q)}}=n^{1-\Omega_{\varepsilon,d}(1)}$ queries in the unidirectional model (with two-sided error). In particular, {if the proximity parameter $\varepsilon$ approaches $0$, then the query complexity of the transformed tester in the unidirectional model approaches $n$ (as the term $d^{-\Theta(q)}$ approaches $0$)}.

One natural question that is left open is that \emph{is the above the transformation tight}? Or equivalently, \emph{can we achieve a much better query complexity, say $n^{c-\Omega_{\varepsilon,d}(1)}$, in the latter model, for some universal constant $c<1$}? Indeed, currently, the best known lower bound for this transformation is for testing \emph{$3$-star-freeness}, where a $3$-star is a $4$-vertex directed graph  such that there exists one center vertex $v$, and for any other three vertices $u$, there is an edge from $u$ to $v$, and no other edges exist.  Hellweg and Sohler \cite{hellweg2012property} have shown that $3$-star-freeness can be tested with a constant number of queries in bidirectional model, while the query complexity of a tester for this property in the unidirectional model is $\Theta(n^{2/3})$ for any constant $\varepsilon>0$. Therefore, there is still a significant gap between the upper bound (i.e., $n^{1-\Omega_{\varepsilon,d}(1)}$) in the bidirectional model in the transformation and the current best-known lower bound (i.e., $\Omega(n^{2/3})$). 

Before we state our result, we formally introduce the definition of property testing in both directional and unidirectional models.  
Let $P=(P_n)_{n\in \mathbb{N}}$ be a $d$-bounded digraph property, where $P_n$ is a property of $d$-bounded digraphs with $n$ vertices. An $n$-vertex graph $G$ is said to be $\varepsilon$-far from satisfying $P_n$ if one needs to modify more than $\varepsilon d n$ edges to make it a $d$-bounded digraph with property $P_n$, where $\varepsilon>0$ is called the proximity parameter.  We say that $P$ is \emph{$q$-query testable} (or that $P$ \emph{can be tested with query complexity $q$})  %
if for every $n$, $\varepsilon$ and $d$, there exists a tester that makes $q = q(n, \varepsilon, d)$ queries and with probability at least $\frac23$, accepts any $n$-vertex $d$-bounded digraph $G$ satisfying $P$,  and rejects any $n$-vertex $d$-bounded digraph $G$ that is $\varepsilon$-far from satisfying $P$. We call such a tester an \emph{$\varepsilon$-tester} for $P$. %

We show that there exists a property that exhibits an $O_{\varepsilon,d}(1)$ vs $\Omega(n^{1-\Theta_{\varepsilon,d}(1)})$ separation of the query complexities between the bidirectional model and unidirectional model,  which implies that the transformation of \cite{czumaj2016relating} is essentially tight.  
\begin{theorem}\label{thm:main}
	For any sufficiently small constant $\varepsilon>0$, there exists a digraph property $P = P_{\varepsilon,d}$ such that $P$ can be tested with $O_{\varepsilon,d}(1)$ queries in the bidirectional model, while any $\varepsilon$-tester for $P$ in the unidirectional model requires $n^{1-f(\varepsilon,d)}$ queries, where $f(\varepsilon,d)$ is a function that approaches $0$ when $\varepsilon$ approaches $0$. 
\end{theorem}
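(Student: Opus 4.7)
The plan is to prove Theorem \ref{thm:main} by taking $P_{\varepsilon,d}$ to be \emph{$k$-star-freeness}, with $k=k(\varepsilon,d)$ chosen so that $k\to\infty$ as $\varepsilon\to 0$ while $m:=\varepsilon d n$ vertex-disjoint $k$-stars still fit into a $d$-bounded digraph on $n$ vertices; for concreteness I would take $k = \lfloor 1/(2\varepsilon d)\rfloor$ and set $f(\varepsilon,d)=1/k$, which tends to $0$ as $\varepsilon\to 0$. The proof then decomposes into a standard bidirectional upper bound and the main $\Omega(n^{1-1/k})$ lower bound in the unidirectional model, which is the bound announced in the abstract for subgraph-freeness whose forbidden subgraph has $k$ source components.

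For the bidirectional side, the tester samples $\Theta(1/\varepsilon)$ vertices uniformly at random, queries (bidirectionally) each sampled vertex's incoming edges, and rejects iff some sampled vertex has in-degree at least $k$. Completeness is immediate since $k$-star-free digraphs have all in-degrees at most $k-1$. For soundness, if $G$ is $\varepsilon$-far then reducing every in-degree below $k$ requires at least $\varepsilon d n$ edge deletions, so $\sum_v \max(0, d_{\myin}(v)-k+1) \geq \varepsilon d n$, forcing at least $\varepsilon n$ vertices to be centers of a $k$-star and the sampler to catch one with probability $\geq 2/3$. The total query count is $O(d/\varepsilon)=O_{\varepsilon,d}(1)$.

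For the unidirectional lower bound, I would apply Yao's minimax principle. Define $\D_{\mathrm{yes}}$ as a uniformly random $d$-bounded digraph whose max in-degree stays below $k$, realized by a random-permutation construction so that out-neighborhoods are essentially uniform; define $\D_{\mathrm{no}}$ by planting $m=\varepsilon d n$ vertex-disjoint $k$-stars into a draw from $\D_{\mathrm{yes}}$, rerouting a handful of edges to preserve $d$-boundedness. A graph drawn from $\D_{\mathrm{no}}$ is $\varepsilon$-far from $k$-star-free since destroying all $m$ planted stars costs at least $m$ edge modifications. The heart of the argument is a birthday-style indistinguishability: any deterministic algorithm $A$ that makes $s$ outgoing-edge queries on vertices $v_1,\ldots,v_s$ triggers the event ``$\{v_1,\ldots,v_s\}$ contains all $k$ sources of some planted star'' with probability at most $m\binom{s}{k}/\binom{n}{k} \leq m(es/n)^k$, which is $o(1)$ whenever $s=o(n^{1-1/k})$. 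Conditioned on the complementary event, the view of $A$ under $\D_{\mathrm{no}}$ is coupled to its view under $\D_{\mathrm{yes}}$, so $A$ errs with constant probability on at least one of the two distributions, yielding the desired $\Omega(n^{1-1/k})=\Omega(n^{1-f(\varepsilon,d)})$ bound.

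The main obstacle is the formal coupling behind the indistinguishability claim. The delicate issue is that queries to a proper subset of a planted star's sources already reveal several edges into a common vertex, which is a low-probability but nonzero coincidence under $\D_{\mathrm{yes}}$; to rule out that such partial information helps the algorithm accumulate advantage or adaptively steer later queries, I would use a deferred-randomness / lazy-sampling scheme that reveals the underlying graph edge by edge, maintaining the invariant that the conditional distributions of the as-yet-unrevealed portions of $\D_{\mathrm{yes}}$ and $\D_{\mathrm{no}}$ agree until some $k$-th source of a planted star is queried. The birthday bound above then translates directly into a total-variation bound on the two transcript distributions, closing the lower bound.
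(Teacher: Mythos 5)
Your overall plan (instantiate $P$ as subgraph-freeness for a graph with $k$ source components, let $k\to\infty$ as $\varepsilon\to 0$, give a constant-query bidirectional tester, and prove an $\Omega(n^{1-1/k})$ unidirectional lower bound) matches the paper's strategy, and your bidirectional tester is essentially correct. But there are two genuine gaps, one minor and one fatal.

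The minor gap is the choice of the forbidden subgraph. A $k$-star has a center of in-degree $k$, so it can only appear in a $d$-bounded digraph when $k\le d$; with your choice $k=\lfloor 1/(2\varepsilon d)\rfloor$, this fails once $\varepsilon$ is small relative to $1/d^2$, and $k$-star-freeness becomes trivially true. The paper avoids this by taking $H$ to be an arbitrary weakly connected digraph with $k$ source components, e.g.\ $k$ singleton sources each pointing to a distinct vertex of a path; this keeps the maximum in/out-degree of $H$ bounded by a constant independent of $k$.

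The fatal gap is in the lower bound. Your coupling invariant---that the transcript distributions under $\D_{\mathrm{yes}}$ and $\D_{\mathrm{no}}$ agree until all $k$ sources of some planted star have been queried---is false, and no lazy-sampling scheme can rescue it as stated. The conditional distributions already diverge once the algorithm queries \emph{two} sources of the same planted star: it then observes two out-edges landing on a common head, and this $2$-collision is strictly more likely under $\D_{\mathrm{no}}$ than under $\D_{\mathrm{yes}}$. Quantitatively, planting $m=\varepsilon d n$ disjoint $k$-stars adds $m\binom{k}{2}$ head-sharing edge pairs, so the expected number of observed $2$-collisions in $s$ out-queries shifts by $\Theta(m\binom{k}{2}(s/n)^2)=\Theta_{\varepsilon,d,k}(s^2/n)$; a tester that counts $2$-collisions distinguishes the two families with $O_{\varepsilon,d,k}(\sqrt{n})$ queries, which is far below $n^{1-1/k}$ for $k\ge 3$. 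The birthday bound you compute controls only the top-order ($k$-wise) coincidences; the real obstruction is matching all lower-order collision statistics simultaneously. This is precisely why the paper does \emph{not} plant stars into a random graph, but instead constructs two explicit in-degree (frequency) distributions $\p,\q$ over $[k]$---via the binomial identities in its Claim and Lemma~\ref{lemma:definitionsofpq}---so that the corresponding frequency variables have $k-1$ \emph{proportional moments}, and then invokes the Raskhodnikova et al.\ indistinguishability lemma for Poissonized histogram algorithms. Without this moment-matching, the $\Omega(n^{1-1/k})$ bound does not follow, and your proposal cannot be completed along the lines you sketch.
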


The above theorem is a direct corollary from the following result regarding testing subgraph-freeness. %
Let $H$ be a directed graph. A strongly connected component\footnote{We call $W\subseteq V(H)$ a \emph{strongly connected component} of $H$ if the subgraph $H[W]$ of $H$ induced by $W$ is strongly connected, and there does not exist any set of vertices $X \subseteq V(H)\setminus W$ such that the subgraph of $H$ induced by $W \cup X$ is strongly connected. That is, the subgraph of $H[W]$ is a strongly connected and maximal. } $W$ is called a \emph{source component} of $H$, if there is no edge from $V(H) \setminus W$ to $W$. %
A directed graph $H$ is said to be \emph{weakly connected} if its underlying undirected graph (i.e., the graph that is obtained by ignoring all the directions of the edges) is connected. 
For example, we note that $k$-star is just a weakly connected digraph with $k$ source components, where a directed graph $H$ with $k+1$ vertices is called a \emph{$k$-star} if there is a vertex $v$ such that each of the other $k$ vertices has exactly one edge pointing to $v$, and $H$ does not contain any other edges. Let $G$ and $H$ be two directed graphs.  The graph $G$ is said to be \emph{$H$-free} if $H$ does not appear as a subgraph in $G$.  %
We have the following theorem on testing $H$-freeness for any (constant-size) $H$ with $k$ source components. %
\begin{theorem}\label{thm:subgraph-lowerbound}
	Let $k$ be any integer such that $k\geq 2$. Let $d$ be some constant. Let $H$ be a weakly connected\footnote{For graphs $H$ that is not weakly connected, we can handle each of its weakly connected components separately.} directed graph with $k$ source components. %
	There exists an $\varepsilon_0=\Theta_{d,k}(1)$ such that any $\varepsilon_0$-tester for testing $H$-freeness %
	of an $n$-vertex $d$-bounded graph requires at least $\Omega(n^{1-\frac{1}{k}})$ queries in the  unidirectional model.
\end{theorem}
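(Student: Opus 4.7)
The plan is to apply Yao's minimax principle: construct two distributions $\mathcal{D}_+$ and $\mathcal{D}_-$ over $n$-vertex $d$-bounded digraphs, the first supported on $H$-free instances and the second supported on instances that are $\varepsilon_0$-far from $H$-free, and prove that no deterministic algorithm making $q=o(n^{1-1/k})$ unidirectional queries can distinguish the two with success probability $2/3$. This would immediately yield the desired lower bound for randomized testers.

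For the construction, I would fix an $H$-free, $d$-bounded ``base'' digraph $G_0$ whose local neighborhoods are locally uniform (for instance, a disjoint union of long directed cycles or a random sparse digraph conditioned on avoiding $H$). The no-distribution $\mathcal{D}_-$ is obtained by choosing $m=\Theta(\varepsilon_0 n)$ pairwise vertex-disjoint ``slots'' of size $|V(H)|$ uniformly at random and planting a copy of $H$ in each slot by rewiring the edges of $G_0$ incident to those vertices so that they match $H$. Crucially, the rewiring is done so that, for each planted copy, every vertex in a source component of $H$ has no incoming edge in the resulting graph other than those prescribed by that copy of $H$; this is possible because source components can be placed on vertices with no incoming edges in $G_0$ (or those in-edges can be redirected elsewhere while preserving the $d$-bounded degree constraint). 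The yes-distribution $\mathcal{D}_+$ performs an analogous random rewiring on the same random slots but using an $H$-free ``dummy'' subgraph in place of $H$ (chosen so that the outgoing-edge structure at non-source vertices matches that of $H$), keeping the two distributions locally identical outside of source components. Standard arguments then show that a typical sample from $\mathcal{D}_+$ is $H$-free while a typical sample from $\mathcal{D}_-$ is $\varepsilon_0$-far from $H$-free, since destroying $\Theta(\varepsilon_0 n)$ edge-disjoint copies of a constant-size $H$ requires $\Omega(\varepsilon_0 n)$ edge modifications.

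The heart of the argument is the following observation tailored to the unidirectional model: in a sample from $\mathcal{D}_-$, a planted source component of a copy of $H$ cannot be reached from outside by following outgoing edges, since by construction it has no incoming edges from the rest of the graph. Hence, to witness a particular planted copy, the algorithm must directly query at least one vertex from each of its $k$ source components (the remaining vertices of the copy can then be discovered by traversing outgoing edges inside $H$, since $H$ is weakly connected and every vertex is reachable in $H$ from some source component). Using a coupling argument, the joint distribution of the answers to the algorithm's first $q$ queries is identical under $\mathcal{D}_+$ and $\mathcal{D}_-$ conditioned on no queried vertex lying in a source component of a planted copy. A union bound over the $m$ copies combined with a birthday-style calculation for independent uniformly placed source components gives
\[
\Pr[\text{algorithm hits all $k$ source components of some copy}] \;\leq\; O\!\left(m\cdot (q/n)^k\right),
\]
which is $o(1)$ whenever $q=o(n^{1-1/k}/\varepsilon_0^{1/k})$.

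The main obstacle is controlling adaptive algorithms: an adaptive tester can in principle exploit information from earlier queries to bias its next query towards a source component. The coupling argument above is designed exactly to overcome this: as long as no source component has yet been hit, the view of the algorithm in $\mathcal{D}_-$ is distributed identically to its view in $\mathcal{D}_+$, so adaptivity buys nothing until the ``first hit,'' at which point the standard birthday bound applies regardless of the querying strategy. A secondary technical issue is to ensure that the local outgoing structure at non-source vertices of planted copies is indistinguishable from that of the base graph; this is handled by the symmetric design of $\mathcal{D}_+$ and $\mathcal{D}_-$ described above, possibly supplemented by small padding gadgets so that the outgoing-edge multiset at every vertex has the same distribution in both cases.
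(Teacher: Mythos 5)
Your high-level plan (Yao's minimax principle, planting copies of $H$, arguing indistinguishability, birthday bound on hitting $k$ source components) is not what the paper does, and more importantly it has a genuine gap at the step where all the difficulty lives: ruling out \emph{statistical} two-sided distinguishers. You assert that the algorithm's view under $\mathcal{D}_+$ and $\mathcal{D}_-$ is identical ``conditioned on no queried vertex lying in a source component of a planted copy,'' and then jump to the probability $O(m(q/n)^k)$ that the algorithm hits \emph{all $k$} source components of some copy. These are two very different events. The first event (no source vertex touched at all) has probability only $1-\Theta(q\varepsilon_0)$, which is tiny for $q=\Theta(n^{1-1/k})$, so that coupling is useless. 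The second event is the one you actually need, but to invoke it you must show that as long as the algorithm has seen at most $k-1$ source components of each copy, the view is (near-)identical under both distributions. That is exactly where a two-sided tester could win: by sampling many source vertices, following their out-edges, and computing collision statistics of orders $2,\dots,k-1$ (how many pairs/triples of sampled source components point to the same center), the tester can distinguish the two distributions without ever completing a full $k$-wise collision, unless the planted structures in $\mathcal{D}_+$ and $\mathcal{D}_-$ are engineered so that \emph{all} of these lower-order collision statistics match. Your proposal never specifies the ``dummy'' subgraph concretely enough to arrange this, and in fact there is a nontrivial obstruction: two distributions over multiplicities $\{1,\dots,k\}$ that agree on all moments up to order $k-1$ and sum to one are forced to be equal, which is precisely why the paper needs \emph{proportional} (not equal) moments.

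The paper's route is to reduce $H$-freeness (in the unidirectional model) to testing whether a sequence is $k$-occurrence-free, and then to construct two distributions $\p,\q$ over $\{1,\dots,k\}$ with $\p_k=0$, $\q_k>0$, and $k-1$ proportional moments (Lemma~\ref{lemma:definitionsofpq}), using a binomial-coefficient identity and an alternating-support choice based on the parity of $k$. The indistinguishability then follows from the Raskhodnikova--Ron--Shpilka--Smith framework for histogram-based Poisson-$s$ algorithms (Lemma~\ref{lem:prop_moments}), which is precisely the tool that rules out the lower-order collision-counting attacks your coupling does not address. Your birthday-bound argument is essentially correct for \emph{one-sided} testers (which must exhibit a witness copy of $H$, and a witness requires hitting all $k$ source components since they have no incoming edges), but the theorem is stated for general testers, and the bulk of the work --- and the entire second half of the paper's Section~2 --- is missing from your proposal.
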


We remark that it has been shown by Hellweg and Sohler \cite{hellweg2012property} that for any $H$ with $k$ source components, $H$-freeness can be tested with query complexity $O_{\varepsilon,d,k}(1)$ in the bidirectional model, and also can be tested with query complexity $O_{\varepsilon,d,k}(n^{1-\frac{1}{k}})$ in the unidirectional model\footnote{{On the high level, their algorithms use the following observation: if a bounded-degree directed graph $G$ is $\varepsilon$-far from $H$-freeness, then $G$ contains $\Omega(\varepsilon n)$ vertex-disjoint copies of $H$. Then in the bidirectional model, one can sample a constant number of vertices and perform BFS from each sampled vertex to find a copy of $H$; in the unidirectional model, one can sample many edges to see if some copy of $H$ is formed.}}. %
Given the above result, we can easily prove Theorem \ref{thm:main}.
\begin{proof}[Proof of Theorem \ref{thm:main}]
	Let $\eta>0$ and define property $P_\eta$ to be the property of being $H$-free, for any $H$ that is weakly connected digraph with $k=\lceil{1/\eta}\rceil$ source components. According to Theorem \ref{thm:subgraph-lowerbound}, any $\varepsilon_0$-tester for $P_\eta$ requires at least $\Omega(n^{1-\frac{1}{k}})$ $>$ $\Omega(n^{1-\eta})$ queries in the unidirectional model, where $\varepsilon_0=\varepsilon_0(d,\eta)$ is a function of $d,\eta$. Now given any sufficiently small constant $\varepsilon>0$, let $\eta'$ be a number satisfying that $\varepsilon=\varepsilon_0(d,\eta')$. Then Theorem \ref{thm:main} follows by taking $P=P_{\eta'}$ and $f(\varepsilon,d)=\eta'$. 
\end{proof}

Furthermore, it was conjectured in \cite{hellweg2012property} that testing $k$-star-freeness requires $\Omega(n^{1-\frac{1}{k}})$ queries in the unidirectional model. Since $k$-star is a directed subgraph with $k$-source components, our Theorem \ref{thm:subgraph-lowerbound} resolves this conjecture.

\subsection{Discussions of previous ideas and our techniques} We first sketch the main ideas of the lower bound for testing $3$-star-freeness given by Hellweg and Sohler \cite{hellweg2012property}. Their proof makes use of a problem called \emph{testing $3$-occurrence-freeness}\footnote{In \cite{hellweg2012property}, the same problem was called {$3$-value freeness}.} of a sequence\footnote{{We use ``sequence'' rather than ``multiset'' as the position of each element affects our construction.}}. Let $A$ be a length-$n$ sequence of integers such that each element in $A$ is from $[\ell]:=\{1,\cdots,\ell\}$ and occurs at most $3$ times. We say $A$ is \emph{$3$-occurrence-free} if no integer in $A$ occurs exactly $3$ times in $A$. We say $A$ is \emph{$\varepsilon$-far from being $3$-occurrence-free} if one needs to change\footnote{It is allowed to use integers that are larger than $\ell$ to change the elements of $A$.} more than $\varepsilon n$ elements of $A$ to obtain a $3$-occurrence-free sequence. \cite{hellweg2012property} gave a local reduction from the problem of testing $3$-occurrence-freeness of a sequence to the problem of testing $3$-star-freeness. That is, given an instance $A$ with $m$ elements of $3$-occurrence-freeness, they constructed a graph $G$ with $\Theta(m)$ vertices, such that
\begin{itemize}
	\item[1)] if $A$ is $3$-occurrence-free, then $G$ is $3$-star-free; if $A$ is $\varepsilon$-far from being $3$-occurrence-free then $G$ is $\Theta(\varepsilon)$-far from being $3$-star-free; 
	\item[2)]  every query to $G$ can be answered by performing $O(1)$ queries to $A$.
\end{itemize} %

To obtain a lower bound for testing $3$-occurrence-freeness,  \cite{hellweg2012property} constructed two classes $\CA$, $\CB$ of length-$n$ sequences such that $\CA$ is a class of $3$-occurrence-free sequences and $\CB$ is a class of sequences that are $\Omega(1)$-far from being $3$-occurrence-free, and the \emph{frequency variables}, denoted by $X_A$ and $X_B$, of the sequences from these two different classes have $2$ proportional moments, i.e.,  
$$\frac{\E[X_B]}{\E[X_A]} = \frac{\E[X_B^2]}{\E[X_A^2]}. %
$$
Then the lower bound $\Omega(n^{2/3})$ for testing $3$-occurrence-freeness follows from a lower bound for distinguishing random variables with $2$-proportional moments given in \cite{raskhodnikova2009strong}. 

Now we note that to obtain a lower bound for testing $H$-freeness for any $H$ with $k$ source components, it suffices to give a lower bound for testing $k$-occurrence-freeness for general $k$ in the way similar as above. That is, we construct two classes $\CA$, $\CB$ of length-$n$ sequences such that $\CA$ is a class of $k$-occurrence-free sequences and $\CB$ is a class of sequences that are $\Omega_k(1)$-far from being $k$-occurrence-free, and the \emph{frequency variables}, denoted by $X_A$ and $X_B$, of the sequences from these two different classes have $k-1$ proportional moments, i.e., 
$$\frac{\E[X_B]}{\E[X_A]} = \frac{\E[X_B^2]}{\E[X_A^2]} = \cdots = \frac{\E[X_B^{k-1}]}{\E[X_A^{k-1}]}. 
$$

However, the main difficulty is to construct two classes of sequences satisfying the above equations for general $k\geq 3$, which was also pointed out in \cite{hellweg2012property}.  Besides the aforementioned construction in \cite{hellweg2012property} which only works for $k=3$, we also note that in \cite{raskhodnikova2009strong}, a special pair of random variables with $k-1$ proportional moments is also constructed (for establishing their lower bound for \textsc{Distinct-Elements}). That is, their random variables take values of the form $(B+3)^i$, for any integers $B>1, k>1$ and $i=0,\dots, k-1$. This leads to a large gap between the expectations of the corresponding variables. To show a lower bound for testing $k$-occurrence-freeness, we need to construct random variables taking values $1,2,\dots,k$, for any integer $k>1$. This is more challenging as it corresponds to a much smaller gap (which is arbitrarily close to $1$) between the expectations of the corresponding variables (see Lemma \ref{lemma:definitionsofpq}). 
To construct such two random variables, we establish some identities related to binomial coefficients, and use them to define two distributions satisfying a number of linear equations which in turn are necessary conditions for two variables having proportional moments.

We then give a local reduction from testing $k$-occurrence-freeness to testing $H$-freeness for $H$ with $k$-source components. The reduction also non-trivially generalizes the one for $3$-star-free in  \cite{hellweg2012property}, as $3$-star is a special subgraph with a nice symmetric property, while an arbitrary subgraph $H$ might contain different types of asymmetric structures. Our main idea is as follows. Given a sequence $S$, we construct a graph $G$ on the fly such that each element in the sequence corresponds to a source component of $H$ in $G$; an element in $S$ appears $k$ times if and only if a copy of $H$ is added in $G$. For the latter, we carefully add $k$ source components of $H$ to $G$ and add edges from these components to one center component (which is the rest part of $H$ after removing all the source components). Finally, we show that this construction preserves the distance to the properties and each query to $G$ can be answered by querying at most $1$ position in $S$. %

\subsection{Other Related work}
Ito, Khoury and Newman \cite{ito2020characterization} recently gave a characterization of monotone and hereditary properties that can be tested with constant query complexity and one-sided error in both bounded-degree bidirectional model and bounded-degree unidirectional model. For testing acyclicity in the bidirectional model, Bender and Ron \cite{bender2002testing} gave a lower bound of $\Omega(n^{1/3})$ queries for algorithms with two-sided error and a lower bound $\Omega(n^{1/2})$ queries for algorithms with one-sided error. The latter lower bound has been improved to $\tilde{\Omega}(n^{5/9})$ queries by Chen, Randolph, Servedio and Sun \cite{chen2020lower}. 

In the dense directed graph model (with different types of queries and notion of ``$\varepsilon$-far''), Alon and Shapira \cite{alon2004testing} gave an algorithm with constant query complexity for testing subgraph-freeness.

There exists a class of properties which can be tested with constant number of queries by the so-called \emph{proximity-oblivious testers} \cite{goldreich2011proximity}. Goldreich and Ron \cite{goldreich2016sample} showed that any property that can be tested by a proximity-oblivious tester that makes $q$ uniformly distributed queries with constant detection probability can be tested by a sample-based testers of sample complexity $O(n^{1-1/q})$, where a sample-based tester only samples elements independently from some distribution of the tested object. Building upon \cite{fischer2015trading,gur2021power}, Dall'Agnol, Tom and Lachish \cite{dall2021structural} recently showed that any property that is testable with $q$ queries admits a sample-based tester with sample complexity $n^{1-1/O(q^2\log^2 q)}$. Their algorithms are defined over a constant-size output alphabet, which is very different from the bounded degree (directed) graph model, in which a super constant alphabet is needed. 

Valiant developed a wishful thinking theorem in \cite{valiant2008testing}, telling that two distributions whose so-called \emph{$k$-based moments} have small gap are indistinguishable by $k$-Poissonized samples. This is a tool for establishing lower bounds of testing symmetric properties on distributions. On a very high level, both \cite{valiant2008testing} and our work are constructing far distributions with the same collision, while the details for the constructions differ significantly. For example, our proof is built upon Corollary 5.7 of \cite{raskhodnikova2009strong}, which requires to carefully construct two distributions that have proportional moments. In \cite{valiant2008testing}, it is required to construct two distributions whose $k$-based moments have small gap. It is unclear if two distributions with small gap between $k$-based moments have proportional moments, or vice versa. In addition, we are using very different properties of Vandermonde matrix from those used in \cite{valiant2008testing}. Although it is possible to obtain a lower bound for the $k$-occurrence-freeness testing problem by converting it into a distribution testing problem and subsequently employing Valiant's wishful thinking theorem, we believe that this approach yields a suboptimal bound compared to ours.

\section{A Lower Bound for Testing $k$-Occurrence-freeness}

In this section, we will prove the lower bound on the query complexity for testing $k$-occurrence-freeness, %
which is defined as follows. Given a sequence $A$ of $n$ integers such that each entry of $A$ is from $[n]:=\{1,\dots,n\}$ and each element $i\in [n]$ occurs at most $k$ times, the problem is to distinguish if $A$ is \emph{$k$-occurrence-free}, i.e., no element occurs in $k$ positions of $A$, or $A$ is \emph{$\varepsilon$-far from $k$-occurrence-free}, i.e., more than $\varepsilon n$ elements of $A$ needs to be changed to make it $k$-occurrence-free. We assume that the algorithm can query the element (or the value) of any position of the sequence in constant time. The goal is to solve the problem by making as few queries as possible. We will show the following result. 
\begin{theorem}\label{thm:kvalue-lowerbound}
	Any algorithm for testing $k$-occurrence-freeness with parameter $\varepsilon=\Omega_k(1)$ requires at least $\Omega(n^{1-1/k})$ queries, where $n$ is the length of the input sequence.
\end{theorem}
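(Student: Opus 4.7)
The plan is to establish this lower bound by a moment-matching construction. I will build two families $\CA$ and $\CB$ of length-$n$ sequences, where $\CA$ consists of $k$-occurrence-free sequences and $\CB$ consists of sequences that are $\Omega_k(1)$-far from $k$-occurrence-free. The structural feature driving the lower bound is that the empirical frequency distributions of the two families will share their first $k-1$ moments up to a common multiplicative factor; this will let me invoke the distinguishing lower bound of \cite{raskhodnikova2009strong} (Corollary 5.7) to rule out algorithms making $o(n^{1-1/k})$ queries to the sequence.

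The core combinatorial step is to exhibit two probability distributions $\p = (p_1, \dots, p_{k-1})$ supported on $\{1, \dots, k-1\}$ and $\q = (q_1, \dots, q_k)$ supported on $\{1, \dots, k\}$, with $q_k$ strictly positive and bounded below by a constant depending only on $k$, such that the random variables $X_A \sim \p$ and $X_B \sim \q$ satisfy
\begin{equation*}
	\frac{\E[X_B]}{\E[X_A]} = \frac{\E[X_B^2]}{\E[X_A^2]} = \cdots = \frac{\E[X_B^{k-1}]}{\E[X_A^{k-1}]}.
\end{equation*}
This amounts to a linear system whose coefficient matrix has a Vandermonde-type structure (its rows are the powers $j, j^2, \dots, j^{k-1}$). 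I will exploit classical identities among binomial coefficients — in particular, alternating-sum identities that express how powers of $k$ decompose in terms of the lower powers $1^i, 2^i, \dots, (k-1)^i$ — to write down an explicit solution in closed form and then verify that all $p_j$ and $q_j$ are strictly positive. This is the step that earlier work could not push beyond $k=3$, and I expect it to be the main technical obstacle: because the target values lie in $\{1,\dots,k\}$ rather than, say, $\{(B+3)^0, \dots, (B+3)^{k-1}\}$ as in \cite{raskhodnikova2009strong}, the resulting expectations of $X_A$ and $X_B$ differ by a factor arbitrarily close to $1$, leaving very little slack for positivity of the probabilities.

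Given $\p$ and $\q$, I form $\CA$ by drawing a set of element labels, assigning each a frequency i.i.d.\ from $\p$ (rescaled and rounded so the frequencies sum to $n$), and then placing the labels into the $n$ positions in a uniformly random order; $\CB$ is defined analogously using $\q$. A Chernoff-type concentration argument shows that with high probability every sequence in $\CA$ is genuinely $k$-occurrence-free, while every sequence in $\CB$ contains $\Omega_k(n)$ labels that occur exactly $k$ times; since destroying each such occurrence requires a separate edit, sequences in $\CB$ are $\Omega_k(1)$-far from $k$-occurrence-free. Finally, the $(k-1)$-proportional-moments property combined with Corollary~5.7 of \cite{raskhodnikova2009strong} implies that any (adaptive, two-sided) algorithm must make $\Omega(n^{1-1/k})$ position queries to distinguish a uniformly random member of $\CA$ from one of $\CB$, which yields the claimed lower bound. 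Once the construction of $\p, \q$ is in hand, the reduction and concentration steps are essentially routine; the crux of the proof is the explicit moment-matching construction.
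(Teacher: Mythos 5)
Your high-level strategy matches the paper's exactly: reduce to constructing two frequency distributions $\p$ (with $p_k=0$) and $\q$ (with $q_k \geq \Omega_k(1)$) having $k-1$ proportional moments, then invoke the Poissonization/histogram simulation and the indistinguishability result of Raskhodnikova et al.\ (their Lemma 5.3 and Corollary 5.7) to get the $\Omega(n^{1-1/k})$ lower bound. You have also correctly identified the moment-matching construction as the crux. The problem is that you then leave precisely that crux as a to-do ("I will exploit classical identities among binomial coefficients\dots to write down an explicit solution in closed form"), and this is essentially the entire content of the paper's proof. Without the actual pair $(\p,\q)$ in hand, there is no theorem. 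The paper's solution is quite specific and not something you would stumble onto by generic Vandermonde manipulation: set $\w_i=(-1)^i\binom{k}{i}$, show by repeatedly applying $x\frac{d}{dx}$ to $(1+x)^k$ and evaluating at $x=-1$ that the Vandermonde matrix $V=(j^i)_{0\le i\le k-1,\,1\le j\le k}$ satisfies $V\w = (-1,0,\dots,0)^T$, and then choose $\p$ supported on indices of one parity and $\q$ on the complementary parity (proportional to $\binom{k}{i}$, normalized by $2^{k-1}$ and $2^{k-1}-1$) so that $\q = d\p + (d-1)\w$ holds coordinatewise for a scalar $d$ close to $1$. Your stated plan to ``verify that all $p_j$ and $q_j$ are strictly positive'' is actually misdirected: in the construction that works, roughly half the entries of each vector are zero, and the parity-splitting is exactly what makes nonnegativity achievable despite $d\approx 1$. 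If you insist on strictly positive entries you will likely find the linear system has no admissible solution.

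Two smaller points. For the families $\CA$, $\CB$, the paper simply takes every sequence to have its empirical frequency histogram equal exactly to $\p$ (resp.\ $\q$); your i.i.d.-assignment-plus-Chernoff version is more machinery than needed and introduces rounding/normalization issues you would then have to resolve (e.g.\ a sequence in $\CA$ could, with small but nonzero probability, contain an element appearing $k$ times if you draw frequencies i.i.d.\ without truncation, which would destroy the perfect-completeness side of the reduction). Also, Corollary 5.7 alone only speaks to Poissonized, histogram-only algorithms; to conclude anything about arbitrary adaptive testers with $t$ queries you need the accompanying simulation lemma (the paper's Lemma \ref{lem:prop_moments}, item 1), so make sure that step is explicit in the final write-up.
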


\subsection{Basic tools and notions} To prove the above theorem, we will make use of a lower bound by Raskhodnikova et al. \cite{raskhodnikova2009strong} for distinguishing two sequences satisfying some property. %
We first introduce two definitions.

\begin{definition}[\textbf{Frequency variable}]\label{def:freq var}
	Let $A$ be a sequence of integers. We define its \emph{frequency variable $X_A$} as follows. Choose a number uniformly at random from the set of distinct elements that occur in $A$ and then let $X_A$ denote its \emph{frequency}\footnote{We directly adopt the notion ``frequency'' from \cite{raskhodnikova2009strong}.}, i.e., the number of times it occurs. %
\end{definition}

Take the following sequence $S=\{1,2,1,3,2,1,4\}$ as an example. There are $4$ distinct elements (or values) in $S$: value $1$ occurs $3$ times, value $2$ occurs twice, value $3$ and $4$ each occurs once. Thus the frequency variable $X_S$ of $S$ satisfies that $\Pr[X_S=1]=0.5$, $\Pr[X_S=2]=0.25$, $\Pr[X_S=3]=0.25$.

\begin{definition}[\textbf{Proportional moments}]\label{def:k-prop moments}
	Two random variables $X_1$ and $X_2$ %
	are said to have \emph{$k-1$ proportional moments},
	if $ \frac{\E[X_2]}{\E[X_1]} = \frac{\E[X_2^2]}{\E[X_1^2]} = \cdots = \frac{\E[X_2^{k-1}]}{\E[X_1^{k-1}]}. $ {We say that two sequences have $k-1$ proportional moments if their frequency variables have $k-1$ proportional moments.}
\end{definition}

Let $P$ denote a property defined on sequence of integers such that it is invariant under any permutation of indices and values. \cite{raskhodnikova2009strong} has shown that any tester for $P$ that makes $t$ queries can be simulated by a \emph{Poisson}-$s$ algorithm that only looks at the histogram of the samples as its input, and $s=O(t)$. Relevant definitions are as follows.%

\begin{definition}[\textbf{\emph{Poisson}-$s$ algorithm}]\label{def:poisson-s alg}
	An algorithm is called a \emph{Poisson}-$s$ algorithm if the number of samples of the algorithm is determined by a Poisson distribution with the expectation $s$.
\end{definition}

\begin{definition}[\textbf{Histogram}]\label{def:histogram}
	Given a sequence $S$, the histogram $H$ of $S$ is a function defined as follows:$$H(i):=|\{ s\in S | \mbox{$s$ occurs exactly $i$ times in $S$}\}|$$
\end{definition}

In \cite{raskhodnikova2009strong}, Raskhodnikova et al. proved that if two sequences have $k-1$ proportional moments and $s=o(n^{1-\frac{1}{k}})$, then any \emph{Poisson}-$s$ algorithm can't distinguish their histograms.%
Formally, based on Lemma 5.3 and Corollary 5.7 in \cite{raskhodnikova2009strong}, we have the following Lemma.%

\begin{lemma}[\cite{raskhodnikova2009strong}]\label{lem:prop_moments}
	Let $X_A$ and $X_B$ be two random variables with $k-1$ proportional moments. And let $D_{X_A}$ and $D_{X_B}$ be two length-$n$ sequences of integers, whose frequency variables are $X_A$ and $X_B$, respectively. Let $P$ be a property of sequences that is invariant under permutations of indices and values, {and let $\varepsilon > 0$ be a constant}.
	
	\begin{enumerate}
		\item If $\mathcal{A'}$ is a tester for $P$ with $t$ queries, i.e., $\mathcal{A'}$ accepts the input sequence that satisfies $P$ with probability at least $\frac{2}{3}$; it rejects any sequence that is $\varepsilon$-far from satisfying $P$, with probability at least $\frac{2}{3}$.
		
		Then there must be a \emph{Poisson}-$s$ algorithm $\mathcal{A}$ that gets only the histogram of the samples, where $s=O(t)$, satisfiying the following: if the input sequence satisfies $P$, $\mathcal{A}$ accepts with probability at least $\frac{2}{3}-o(1)$; if the input sequence is $\varepsilon$-far from satisfying $P$, $\mathcal{A}$ rejects with probability at least $\frac{2}{3}-o(1)$.
		
		\item For any \emph{Poisson}-$s$ algorithm $\mathcal{A}$ with $s=o(n^{1-\frac{1}{k}})$, if $\mathcal{A}$ gets only access to the histogram of samples, then we have$$|\Pr[\mathcal{A}(D_{X_A})=\True]-\Pr[\mathcal{A}(D_{X_B})=\True]|=o(1).$$
	\end{enumerate}
\end{lemma}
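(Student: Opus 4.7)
The plan is to treat this as essentially a careful restatement of machinery developed in \cite{raskhodnikova2009strong}, proved in two parts following the two items of the statement.

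\textbf{Part 1 (Reduction to a Poisson-$s$ histogram algorithm).} For a tester $\mathcal{A}'$ that makes $t$ queries, I first exploit index invariance of $P$: since $P$ is invariant under permutations of indices, any worst-case tester can be simulated by one that queries uniformly random positions with replacement (a random relabeling of the positions preserves membership in $P$). I then Poissonize the query count by replacing $t$ by $N \sim \Poi(s)$ with $s = \Theta(t)$ chosen so that $N \geq t$ holds with probability $1-o(1)$; the simulator runs $\mathcal{A}'$ on the first $t$ of the $N$ Poissonized samples and outputs a default answer otherwise, losing only $o(1)$ in acceptance and rejection probabilities. Finally, I invoke value invariance of $P$: any two sample outcomes that coincide up to a relabeling of the observed values must lead $\mathcal{A}'$ to the same decision, so its output depends only on the histogram of multiplicities in the Poissonized sample, giving the desired Poisson-$s$ histogram algorithm $\mathcal{A}$.

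\textbf{Part 2 (Indistinguishability for $s = o(n^{1-1/k})$).} Here I would use Poissonization to decouple sample counts across distinct values. When $\Poi(s)$ samples are drawn uniformly with replacement from a length-$n$ sequence, the number of sample copies of a value of frequency $j$ is distributed as $\Poi(js/n)$, and these counts are independent across distinct values. Write $h(j)$ for the number of distinct values of frequency $j$ in the sequence; then the sample histogram is a deterministic statistic of these independent Poisson counts, and its expected value at multiplicity $i$ equals $\sum_j h(j)\,\Pr[\Poi(js/n) = i]$. Expanding the Poisson pmf as a power series in $js/n$ and collecting, the coefficient of $(s/n)^r$ is a linear combination of the power sums $\sum_j h(j) j^r = D\cdot \E[X^r]$, where $D$ is the number of distinct values and $X$ is the frequency variable. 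Since both $D_{X_A}$ and $D_{X_B}$ have length $n$, the identity $\sum_j j\, h(j) = n$ forces $D_A\,\E[X_A] = D_B\,\E[X_B] = n$, so the proportionality of $\E[X_B^r]/\E[X_A^r]$ being constant for $r = 1, \ldots, k-1$ implies that the first $k-1$ coefficients of the power-series expansion agree exactly between $D_{X_A}$ and $D_{X_B}$; the degree-$k$ and higher tail of the expansion is bounded by $O(n(s/n)^k) = o(1)$ precisely in the regime $s = o(n^{1-1/k})$.

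Combining this matching of expected sample histograms with a parallel second-moment estimate (the variance of each histogram bin is also controlled by a similar power-series argument), a direct coupling/total-variation computation bounds the distance between the distributions of the sample histogram under $D_{X_A}$ and $D_{X_B}$ by $o(1)$, whence no histogram-only algorithm can distinguish them with advantage more than $o(1)$. Quantitative bounds on the total variation are exactly Lemma 5.3 and Corollary 5.7 of \cite{raskhodnikova2009strong}, which I would invoke as a black box.

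\textbf{Main obstacle.} The delicate step is in Part 2: verifying that the possibly different numbers of distinct values in $D_{X_A}$ and $D_{X_B}$, together with the common proportionality constant across the first $k-1$ moments, cancel \emph{exactly} in the expansion of the expected sample histogram, so that the leading $k-1$ orders vanish rather than merely being small, leaving only an $O((s/n)^k \cdot n)$ tail. Since the statement is quoted directly from \cite{raskhodnikova2009strong}, my plan is to verify that the hypotheses of Definition \ref{def:k-prop moments} match theirs and cite the relevant lemmas for the precise error analysis, rather than reproduce the full calculation here.
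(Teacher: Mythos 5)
Your proposal lands in the same place as the paper: the paper offers no proof of this lemma beyond importing it from \cite{raskhodnikova2009strong} (explicitly, Lemma 5.3 and Corollary 5.7 there), which is exactly the black box your argument ultimately invokes, so the approach is essentially identical. Your surrounding sketch (Poissonization, reduction to the histogram via index/value invariance, and the cancellation of the first $k-1$ orders of the expected fingerprint using $D_A\E[X_A]=D_B\E[X_B]=n$ together with proportional moments) is a faithful account of the cited machinery; just note that the step ``matching expected histograms plus a second-moment estimate gives $o(1)$ total variation'' is heuristic on its own and is precisely what the cited results of \cite{raskhodnikova2009strong} establish rigorously.
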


Note that by the above Lemma, %
for a property $P$ that is invariant under permutation of indices and values, any tester for $P$ can be well simulated by a \emph{Poisson}-$s$ algorithm, which only accesses to the histogram of samples. Thus it suffices to only consider such \emph{Poisson}-$s$ algorithms. Furthermore, if there exist two instances of $P$ with proportional moments, then it is hard to distinguish these two instances, for any \emph{Poisson}-$s$ algorithm that only accesses to the histogram of samples.

\subsection{Proof of Theorem \ref{thm:kvalue-lowerbound}}
Now we give the proof of Theorem \ref{thm:kvalue-lowerbound}. %
We first note that $k$-occurrence-freeness is a property that is invariant under permutation of indices and values. Suppose that there exist two families of sequence instances, denoted by $\CA$ and $\CB$, respectively, such that 1) $\CA$ and $\CB$ have $k-1$ proportional moments; 2) sequences in $\CA$ are $k$-occurrence-free, and sequences in $\CB$ are far from $k$-occurrence-freeness. 
Now assume that there exist a tester $\mathcal{A'}$ for $k$-occurrence-freeness with $s=o(n^{1-\frac{1}{k}})$ queries. Then, according to Lemma \ref{lem:prop_moments}, there must be a \emph{Poisson}-$s$ algorithm $\mathcal{A}$ that gets only access to the histogram of samples. For such algorithm $\mathcal{A}$, we have
$$
|\Pr[\mathcal{A}(D_{X_A})=\True]-\Pr[\mathcal{A}(D_{X_B})=\True]| = (\frac{2}{3}-o(1))-(\frac{1}{3}+o(1)) \geq \frac{1}{6},
$$
which contradicts to the second part of Lemma \ref{lem:prop_moments} and thus implies the $\Omega(n^{1-\frac{1}{k}})$ lower bound. Therefore, to prove Theorem \ref{thm:kvalue-lowerbound}, it suffices to construct two families of sequences with the above desired properties. 

\begin{proof}[Proof of Theorem \ref{thm:kvalue-lowerbound}]\label{prove k-value}
	We first construct two classes, denoted by $\CA,\CB$, of length-$n$ sequences, such that for any sequences $A\in \CA$ and $B\in \CB$, it holds that 1) $A$ is $k$-occurrence-free and $B$ is $\varepsilon$-far from $k$-occurrence-free, and 2) the frequency variables $X_A, X_B$ of these two instances $A,B$ have $k-1$ proportional moments. 
	
	To do so, we first prove the claim.%
	\begin{claim}
		It holds that 
		$$\label{eq:eq mat}
		\begin{pmatrix}
			1      & 1       & \cdots & 1       \\
			1      & 2       & \cdots & k       \\
			1      & 2^2     & \cdots & k^2     \\
			\vdots & \vdots  & \ddots & \vdots  \\
			1      & 2^{k-1} & \cdots & k^{k-1} \\
		\end{pmatrix}
		\cdot
		\begin{pmatrix}
			(-1)^1\tbinom{k}{1} \\
			(-1)^2\tbinom{k}{2} \\
			(-1)^3\tbinom{k}{3} \\
			\vdots              \\                (-1)^k\tbinom{k}{k} \\
		\end{pmatrix} 
		=
		\begin{pmatrix}
			-1     \\
			0      \\
			0      \\
			\vdots \\
			0
		\end{pmatrix}.
		$$
	\end{claim}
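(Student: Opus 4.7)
The claim is a compact matrix identity; I would first rewrite it entrywise. The row indexed by $j\in\{0,1,\ldots,k-1\}$ of the left-hand side equals $\sum_{i=1}^{k} (-1)^i \binom{k}{i} i^j$, so the claim is equivalent to the two families of identities
\begin{equation*}
\sum_{i=1}^{k} (-1)^i \binom{k}{i} = -1 \quad \text{and} \quad \sum_{i=1}^{k} (-1)^i \binom{k}{i} i^j = 0 \text{ for } 1\le j \le k-1.
\end{equation*}
Since the $i=0$ term contributes $1$ for $j=0$ and $0$ for $j\ge 1$, it is equivalent (and more symmetric) to prove $\sum_{i=0}^{k}(-1)^i \binom{k}{i} i^j = 0$ for every $0\le j \le k-1$, and then peel off the $i=0$ term to recover the stated right-hand side.

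For the $j=0$ case, this is immediate from the binomial theorem: $\sum_{i=0}^k (-1)^i\binom{k}{i} = (1-1)^k = 0$, so moving the $i=0$ term across yields $\sum_{i=1}^k (-1)^i\binom{k}{i} = -1$.

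For $1 \le j \le k-1$, the cleanest route is to view the alternating sum as a $k$-th finite difference. Let $\Delta$ denote the forward difference operator $\Delta f(x) = f(x+1)-f(x)$; iterating gives $\Delta^k f(0) = \sum_{i=0}^{k}(-1)^{k-i}\binom{k}{i} f(i)$. Applied to $f(x)=x^j$ with $j<k$, the operator $\Delta^k$ annihilates $f$ because each application of $\Delta$ strictly decreases the degree of a polynomial; hence $\sum_{i=0}^{k}(-1)^{k-i}\binom{k}{i} i^j = 0$, and multiplying by $(-1)^k$ gives the desired identity. Equivalently, one can expand $(e^t-1)^k = \sum_{i=0}^k \binom{k}{i}(-1)^{k-i} e^{it}$ as a power series in $t$ and observe that the coefficient of $t^j/j!$ vanishes whenever $j<k$, since $(e^t-1)^k$ has order exactly $k$ at $t=0$.

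There is no real obstacle beyond recognizing the identity. The only subtlety to flag is the index shift from $\{0,\ldots,k\}$ to $\{1,\ldots,k\}$, which is precisely what produces the ``$-1$'' in the top entry and the zeros below; I would mention this explicitly to make clear why the right-hand side has the asymmetric form claimed.
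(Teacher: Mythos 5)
Your proof is correct. You reduce the claim to the standard identity $\sum_{i=0}^{k}(-1)^i\binom{k}{i}\,i^j=0$ for $0\le j\le k-1$ and handle the top entry by peeling off the $i=0$ term; both the finite-difference argument (the operator $\Delta^k$ annihilates polynomials of degree below $k$) and the exponential-generating-function argument ($(e^t-1)^k$ has a zero of order $k$ at $t=0$) are valid routes to this identity.

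This differs from the paper's proof in execution, though the mechanism is the same in spirit. The paper introduces the helper functions $f_0(x)=(1+x)^k$ and $f_j(x)=x\,f_{j-1}'(x)$, so that $f_j(x)=\bigl(x\tfrac{d}{dx}\bigr)^j(1+x)^k$, and proves by induction two facts: (a) $f_j(x)=\sum_{i=1}^k i^j x^i\binom{k}{i}$, and (b) $f_j(x)=\sum_{i=1}^j a_i\,x^i(1+x)^{k-i}$ with $a_i\ge 0$. Item (b) shows each summand still carries a positive power of $(1+x)$ when $j\le k-1$, so $f_j(-1)=0$, which via item (a) yields the claim. Your finite-difference argument replaces this hand-built induction with the well-known fact that a $k$-fold forward difference kills polynomials of degree less than $k$, which is cleaner and avoids tracking the sign and structure of the coefficients $a_i$. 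The paper's version is more self-contained (it proves the required degree-drop from scratch), while yours is shorter and leans on a standard lemma; either is perfectly acceptable here. One stylistic note: the paper never needs the nonnegativity of the $a_i$, only their existence, so even within the paper's own route that condition could be dropped — your approach sidesteps this red herring entirely.
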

	\begin{proof}
		We define a sequence of helper functions $f_j(x)$ to prove (\ref{eq:eq mat}). %
		$$\label{def:fs}
		f_j(x) =
		\begin{cases}
			(1+x)^k,          & \mbox{$j=0$}                \\
			x\cdot f^{'}_{j-1}(x), & \mbox{$j=1,2,\cdots,k-1$}
		\end{cases}
		$$
		To prove the claim, we note that it suffices to show the following: 
		\begin{eqnarray} f_0(-1)&=&1+\sum_{i=1}^{k}(-1)^\cdot \tbinom{k}{j}=0, \label{equation-the-first}\\
			f_j(-1)&=&\sum_{i=1}^ki^j\cdot (-1)^i\cdot \tbinom{k}{i}=0, \textrm{ for any $j=1,\dots,k-1$.} \label{equation-the-second}
		\end{eqnarray}

		Note that if the above are true, then each line of Equations (\ref{eq:eq mat}) holds, which finishes the proof of the claim. In the following, we prove Equations (\ref{equation-the-first}) and (\ref{equation-the-second}).

		Let us first consider the binomial expansion of $f_0(x)$. We have that
		$$\label{eq:f0_expansion}
		f_0(x)={(1+x)}^k=\ \sum_{i=0}^{k}{x^i\cdot \tbinom{k}{i}=1+}\sum_{i=1}^{k}{x^i\cdot \tbinom{k}{i}}.
		$$
		Thus, $f_0\left(-1\right)=\left(1-1\right)^k=\ 1+\sum_{i=1}^{k}{{(-1)}^i\cdot\tbinom{k}{i}}=0$. That is, Equation (\ref{equation-the-first}) holds.

		To prove Equation (\ref{equation-the-second}), we show that for any $1\leq j\leq k-1$, it holds that
		\begin{enumerate}
			\item[(a)]\label{fj_first} $f_j(x)=\sum_{i=1}^{k}i^j\cdot x^i\cdot \tbinom{k}{i}$, 
			\item[(b)]\label{fj_second} $f_j(x)=\sum_{i=1}^j a_i\cdot x^i\cdot (1+x)^{k-i}$, for some numbers $a_1,\dots,a_j\geq 0$. 
		\end{enumerate}
		Note that by the above two items, we have that $f_j(-1)=0=\sum_{i=1}^{k}i^j\cdot x^i\cdot\tbinom{k}{i}$, for each $j=1,\dots,k-1$, which finishes the proof of Equation (\ref{equation-the-second}) (and the claim). 
		
		In the following, we prove the above two items (a) and (b) by induction. Consider the case $j=1$. By definition of function $f_j(x)$ given by ({\ref{def:fs}}) and the expansion (\ref{eq:f0_expansion}), it holds that
		$$
		{f'_0}\left(x\right)=k\cdot {(1+x)}^{k-1}=\sum_{i=1}^{k}i\cdot x^{i-1}\cdot\tbinom{k}{i},$$
		which implies that
		$$\label{eq:l2}
		f_1(x)=x\cdot  f^{'}_{0}(x)=x\cdot k\cdot (1+x)^{k-1}=\sum_{i=1}^{k}i\cdot x^i\cdot \tbinom{k}{i}
		$$
		Now we assume that the items (a) and (b) hold for $j\leq k-2$, and we prove it for $j+1$. 
		For item (a), since $f_j(x)=\sum_{i=1}^{k}i^j\cdot x^i\cdot \tbinom{k}{i}$, we have that $f'_j(x)=\sum_{i=1}^{k}i^{j+1}\cdot x^{i-1}\cdot \tbinom{k}{i}$. Thus, $$f_{j+1}(x)= \sum_{i=1}^{k}i^{j+1}\cdot x^{i}\cdot \tbinom{k}{i}$$ 
		by Definition (\ref{def:fs}).

		For item (b), since $f_j(x)=\sum_{i=1}^j a_i\cdot x^i\cdot (1+x)^{k-i}$, for some numbers $a_1,\dots,a_j\ge 0$, it holds that $$f_j'(x)=\sum_{i=1}^j(a_i\cdot i\cdot x^{i-1}\cdot (1+x)^{k-i}+a_i\cdot x^i\cdot (k-i)\cdot (1+x)^{k-i-1}).$$ Thus, by Definition (\ref{def:fs}),  
		$$f_{j+1}(x)=\sum_{i=1}^j(a_i\cdot i\cdot x^{i}\cdot (1+x)^{k-i}+a_i\cdot x^{i+1}\cdot (k-i)\cdot (1+x)^{k-i-1})=\sum_{i=1}^{j+1}a'_i \cdot x^{i}\cdot (1+x)^{k-i},$$ 
		for some numbers $a_1',\cdots,a_{j+1}'\geq 0$. 
		
		Therefore, both items (a) and (b) hold and this finishes the proof the claim.
	\end{proof}

	Now we define two distributions $\p,\q$ over $[k]$ as follows. 
	\begin{enumerate}
		\item if $k$ is even, define
		\begin{align*}
			\p_i=  \
			\begin{cases}
				0,                               & \mbox{if $i$ is even} \\
				\frac{1}{2^{k-1}}\cdot \tbinom{k}{i}, & \mbox{if $i$ is odd}
			\end{cases} 
			\qquad   \q_i= 
			\begin{cases}
				\frac{1}{2^{k-1}-1}\cdot \tbinom{k}{i}, & \mbox{if $i$ is even} \\
				0,                                 & \mbox{if $i$ is odd}
			\end{cases} 
		\end{align*}
		
		\item if $k$ is odd, define
		\begin{align*}
			\p_i= 
			\begin{cases}
				\frac{1}{2^{k-1}-1}\cdot \tbinom{k}{i}, & \mbox{if $i$ is even} \\
				0,                                 & \mbox{if $i$ is odd}
			\end{cases} 
			\qquad \q_i= 
			\begin{cases}
				0,                               & \mbox{if $i$ is even} \\
				\frac{1}{2^{k-1}}\cdot \tbinom{k}{i}, & \mbox{if $i$ is odd}
			\end{cases} 
		\end{align*}
	\end{enumerate}

	Now we show the following Lemma.
	\begin{lemma}\label{lemma:definitionsofpq}
		Let $\p,\q$ be defined as above. There exists $d>0$ such that  
		$$\label{eq:solution}
		\begin{pmatrix}
			\q_1    \\
			\q_2    \\
			\q_3    \\
			\vdots \\
			\q_k    \\
		\end{pmatrix}
		=
		d \cdot
		\begin{pmatrix}
			\p_1    \\
			\p_2    \\
			\p_3    \\
			\vdots \\
			\p_k    \\
		\end{pmatrix}
		+
		(d-1) \cdot
		\begin{pmatrix}
			(-1)^1\tbinom{k}{1} \\
			(-1)^2\tbinom{k}{2} \\
			(-1)^3\tbinom{k}{3} \\
			\vdots              \\
			(-1)^k\tbinom{k}{k} \\
		\end{pmatrix}
		$$
	\end{lemma}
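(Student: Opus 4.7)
The plan is to verify the claimed vector identity coordinate by coordinate and read off $d$ explicitly. The key structural observation is that $\p$ and $\q$ have disjoint supports: one is concentrated on the odd indices in $[k]$ and the other on the even indices, with the assignment swapping according to the parity of $k$. The sign vector $\mathbf{b}_i := (-1)^i\binom{k}{i}$ appearing on the right-hand side also depends on $i$ only through the parity of $i$. Consequently the $k$ scalar equations produced by the claimed identity collapse into just two distinct equations in the unknown $d$, one for each parity class, and it suffices to show that both determine the same positive value.

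I would carry this out in two cases. When $k$ is even, reading coordinate $i$ for odd $i$ gives $0 = \frac{d}{2^{k-1}}\binom{k}{i} - (d-1)\binom{k}{i}$, which forces $d = 2^{k-1}/(2^{k-1}-1)$; reading coordinate $i$ for even $i$ gives $\frac{1}{2^{k-1}-1}\binom{k}{i} = (d-1)\binom{k}{i}$, which yields the same $d$. When $k$ is odd the supports of $\p$ and $\q$ swap, and an entirely analogous calculation gives $d = (2^{k-1}-1)/2^{k-1}$. In both cases $d$ is strictly positive (with $d > 1$ for even $k$ and $0 < d < 1$ for odd $k$), establishing the lemma.

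The only real care needed is the parity bookkeeping: one must correctly pair the sign $(-1)^i$ with the vanishing pattern of $\p_i$ and $\q_i$ in each case. Beyond that, the argument is a direct calculation with no substantive obstacle. I remark that this lemma is precisely what lets us exploit the preceding claim downstream: multiplying both sides of $\q = d\p + (d-1)\mathbf{b}$ by $i^j$ and summing over $i$ gives $\sum_i i^j \q_i = d\sum_i i^j \p_i$ for $j = 1,\ldots,k-1$, because the claim asserts that $\mathbf{b}$ is killed by exactly those rows of the Vandermonde-like matrix. Thus $\p$ and $\q$ will have $k-1$ proportional moments with common ratio $d$, which is what is required by Lemma~\ref{lem:prop_moments}. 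Moreover, the $j=0$ row of the claim reads $\sum_i \mathbf{b}_i = -1$, so $\sum_i \q_i = d\sum_i \p_i - (d-1) = d - (d-1) = 1$, confirming that $\q$ is indeed a probability distribution.
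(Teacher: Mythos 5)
Your proof is correct and matches the paper's approach: both verify the vector identity coordinate-by-coordinate, splitting into parity classes (which collapse the $k$ equations into two) and case-splitting on the parity of $k$, arriving at $d = 2^{k-1}/(2^{k-1}-1)$ for even $k$ and $d = (2^{k-1}-1)/2^{k-1}$ for odd $k$. The only presentational difference is that you solve for $d$ from the two parity equations and observe consistency, whereas the paper posits $d$ upfront and verifies; the observation that $\q$ sums to $1$ via the $j=0$ row is a nice sanity check not in the paper but not needed for the lemma.
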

	
	\begin{proof}
		For the case that $k$ is even, we let $d=1+\frac{1}{2^{k-1}-1}$.  
		
		First note that $\p_k=0$ and $\q_k=(d-1)\cdot \tbinom{k}{k}$. Thus, the last equation holds. For even $i\in \{2,4,\ldots,k \}$, $\p_i=0$ and $\q_i=d\cdot \p_i+(d-1)\cdot \tbinom{k}{i}$. For odd $i\in \{1,3,\ldots,k-1 \}$, $\p_i=\frac{d-1}{d}\cdot \tbinom{k}{i}$ and $\q_i=d\cdot \p_i + (d-1)\cdot (-1)\cdot \tbinom{k}{i}=0$. Thus, Equation (\ref{eq:solution}) holds.
		
		For the case that $k$ is odd, we let $d=1-\frac{1}{2^{k-1}}$. 
		
		Note that $\p_k=0$ and $\q_k=(1-d)\cdot \tbinom{k}{k}$. Thus, the last equation holds. For odd $i\in \{1,3,\ldots,k \}$, $\p_i=0$ and $\q_i=d\cdot \p_i + (1-d)\cdot \tbinom{k}{i}$.
		For even $i\in \{2,4,\ldots,k-1 \}$, $\p_i=\frac{1-d}{d}\cdot\tbinom{k}{i}$ and $\q_i=d\cdot \p_i+(d-1)\cdot \tbinom{k}{i}=0$. Thus, Equation (\ref{eq:solution}) holds.
	\end{proof}

	\begin{lemma}
		Let $k$ be any integer with $k\geq 2$. Let $\p,\q$ be distributions over $[k]$  defined as above. It holds that 
		\begin{enumerate}
			\item $\p_k=0$ and $\q_k \geq \frac{1}{2^k}$;
			\item for any two random variables $X_A$ and $X_B$ with distributions $\p$ and $\q$, respectively, it holds that $X_A$ and $X_B$ have $k-1$ proportional moments.
		\end{enumerate}
	\end{lemma}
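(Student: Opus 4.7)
The plan is to split into the two parts and derive each directly from the structural results already established.

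For part 1, the claim reduces to reading off the piecewise definitions. In both parity cases, the coordinate $\p_k$ falls into the branch that is defined to be $0$: when $k$ is even, $\p_i=0$ for even $i$; when $k$ is odd, $\p_i=0$ for odd $i$. For $\q_k$, since $\binom{k}{k}=1$, the nonzero branch of $\q$ at index $k$ evaluates to either $\frac{1}{2^{k-1}-1}$ (if $k$ is even) or $\frac{1}{2^{k-1}}$ (if $k$ is odd), both of which are at least $\frac{1}{2^k}$. So part 1 is immediate.

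For part 2, I would combine the previous lemma with the claim. The previous lemma gives the pointwise identity
\[
\q_i = d\cdot\p_i + (d-1)\cdot(-1)^i\tbinom{k}{i},\qquad i=1,\dots,k,
\]
for an explicit scalar $d$ (depending on the parity of $k$). Multiplying by $i^j$ and summing over $i$ yields
\[
\E[X_B^j]=\sum_{i=1}^k i^j\q_i = d\cdot\E[X_A^j] + (d-1)\cdot\sum_{i=1}^k i^j(-1)^i\tbinom{k}{i}.
\]
By the preceding claim (the matrix identity, which is precisely the statement that the vector $((-1)^i\binom{k}{i})_i$ lies in the kernel of the rows $(1^j,2^j,\dots,k^j)$ for $j=1,\dots,k-1$), the second sum vanishes for every $j\in\{1,\dots,k-1\}$. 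Hence
\[
\E[X_B^j]=d\cdot\E[X_A^j]\qquad\text{for all }j=1,\dots,k-1,
\]
so the ratio $\E[X_B^j]/\E[X_A^j]$ equals the common constant $d$, and $X_A,X_B$ have $k-1$ proportional moments by definition.

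I do not expect a serious obstacle: all the heavy lifting was done in the claim (proving the kernel identity via the helper functions $f_j$) and the previous lemma (identifying the right constant $d$ in each parity case). The only small thing to check is that the denominators $\E[X_A^j]$ are nonzero so that the ratios are well-defined; this is clear because $\p$ is a probability distribution on $[k]$ with at least one strictly positive coordinate at a positive index, so $\E[X_A^j]>0$ for every $j\ge 1$. I would also briefly sanity-check the $j=0$ case to confirm consistency: plugging $j=0$ into the same derivation recovers $1 = d\cdot 1 + (d-1)(-1) = 1$, matching the fact that $\p$ and $\q$ are both probability distributions, which is reassuring though not needed for the statement.
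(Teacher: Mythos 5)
Your proof is correct and takes essentially the same route as the paper: part 1 by reading off the definitions, and part 2 by combining the pointwise identity $\q_i = d\,\p_i + (d-1)(-1)^i\binom{k}{i}$ from the previous lemma with the kernel identity from the claim. The paper phrases the part-2 argument in matrix form (Vandermonde matrix times probability vectors), whereas you write the same computation as scalar sums, but the underlying argument is identical.
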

	
	\begin{proof}
		The first item follows from the definitions of $\p$ and $\q$. %
		
		Now prove the second item. Let $d>0$ be the number from Lemma \ref{lemma:definitionsofpq}. We will show that 
		$$
		\frac{\E[X_B]}{\E[X_A]} = \frac{\E[X_B^2]}{\E[X_A^2]} = \cdots = \frac{\E[X_B^{k-1}]}{\E[X_A^{k-1}]} = d, %
		$$
		or equivalently, 
		$$\label{eq:proportional matrix}
		\begin{pmatrix}
			1            \\
			\E[X_B]       \\
			\E[X_B^2]     \\
			\vdots       \\
			\E[X_B^{k-1}] \\
		\end{pmatrix}
		= d \cdot
		\begin{pmatrix}
			1/d          \\
			\E[X_A]       \\
			\E[X_A^2]     \\
			\vdots       \\
			\E[X_A^{k-1}] \\
		\end{pmatrix}.
		$$
		
		By the definition $X_A$, it holds that for any $0\leq i\leq k-1$, $\E[X_A^{i}]=\sum_{j=1}^kp_j\cdot j^i$. That is, 
		$$\label{eq:moments of X_A}
		\begin{pmatrix}
			1            \\
			\E[X_A]       \\
			\E[X_A^2]     \\
			\vdots       \\
			\E[X_A^{k-1}] \\
		\end{pmatrix}
		=
		\begin{pmatrix}
			1      & 1       & \cdots & 1       \\
			1      & 2       & \cdots & k       \\
			1      & 2^2     & \cdots & k^2     \\
			\vdots & \vdots  & \ddots & \vdots  \\
			1      & 2^{k-1} & \cdots & k^{k-1} \\
		\end{pmatrix}
		\begin{pmatrix}
			\p_1    \\
			\p_2    \\
			\p_3    \\
			\vdots \\
			\p_k    \\
		\end{pmatrix}
		$$
		Similarly, it holds that
		$$\label{eq:moments of X_B}
		\begin{pmatrix}
			1            \\
			\E[X_B]       \\
			\E[X_B^2]     \\
			\vdots       \\
			\E[X_B^{k-1}] \\
		\end{pmatrix}
		=
		\begin{pmatrix}
			1      & 1       & \cdots & 1       \\
			1      & 2       & \cdots & k       \\
			1      & 2^2     & \cdots & k^2     \\
			\vdots & \vdots  & \ddots & \vdots  \\
			1      & 2^{k-1} & \cdots & k^{k-1} \\
		\end{pmatrix}
		\begin{pmatrix}
			\q_1    \\
			\q_2    \\
			\q_3    \\
			\vdots \\
			\q_k    \\
		\end{pmatrix}
		$$

		By Equations (\ref{eq:moments of X_A}) and (\ref{eq:moments of X_B}), we know that to prove Equation (\ref{eq:proportional matrix}), it suffices to show that
		$$\label{eq:core matrix}
		\begin{pmatrix}
			1      & 1       & \cdots & 1       \\
			1      & 2       & \cdots & k       \\
			1      & 2^2     & \cdots & k^2     \\
			\vdots & \vdots  & \ddots & \vdots  \\
			1      & 2^{k-1} & \cdots & k^{k-1} \\
		\end{pmatrix}
		\begin{pmatrix}
			\q_1    \\
			\q_2    \\
			\q_3    \\
			\vdots \\
			\q_k    \\
		\end{pmatrix}
		=
		d \cdot 
		\begin{pmatrix}
			1/d    & 1/d     & \cdots & 1/d     \\
			1      & 2       & \cdots & k       \\
			1      & 2^2     & \cdots & k^2     \\
			\vdots & \vdots  & \ddots & \vdots  \\
			1      & 2^{k-1} & \cdots & k^{k-1} \\
		\end{pmatrix}
		\begin{pmatrix}
			\p_1    \\
			\p_2    \\
			\p_3    \\
			\vdots \\
			\p_k    \\
		\end{pmatrix}.
		$$
		
		Recall that by Lemma \ref{lemma:definitionsofpq}, it holds that
		$$%
		\begin{pmatrix}
			\q_1    \\
			\q_2    \\
			\q_3    \\
			\vdots \\
			\q_k    \\
		\end{pmatrix}
		=
		d \cdot
		\begin{pmatrix}
			\p_1    \\
			\p_2    \\
			\p_3    \\
			\vdots \\
			\p_k    \\
		\end{pmatrix}
		+
		(d-1) \cdot
		\begin{pmatrix}
			(-1)^1\tbinom{k}{1} \\
			(-1)^2\tbinom{k}{2} \\
			(-1)^3\tbinom{k}{3} \\
			\vdots              \\
			(-1)^k\tbinom{k}{k} \\
		\end{pmatrix}
		$$

		Substituting $\q_i$ from the above equation to the left hand side of equation (\ref{eq:core matrix}) gives us that 
		\begin{align*}
			& d \cdot
			\begin{pmatrix}
				1      & 1       & \cdots & 1       \\
				1      & 2       & \cdots & k       \\
				1      & 2^2     & \cdots & k^2     \\
				\vdots & \vdots  & \ddots & \vdots  \\
				1      & 2^{k-1} & \cdots & k^{k-1} \\
			\end{pmatrix}
			\cdot
			\begin{pmatrix}
				\p_1    \\
				\p_2    \\
				\p_3    \\
				\vdots \\
				\p_k    \\
			\end{pmatrix}
			+
			(d-1) \cdot
			\begin{pmatrix}
				1      & 1       & \cdots & 1       \\
				1      & 2       & \cdots & k       \\
				1      & 2^2     & \cdots & k^2     \\
				\vdots & \vdots  & \ddots & \vdots  \\
				1      & 2^{k-1} & \cdots & k^{k-1} \\
			\end{pmatrix}
			\cdot
			\begin{pmatrix}
				(-1)^1\tbinom{k}{1} \\
				(-1)^2\tbinom{k}{2} \\
				(-1)^3\tbinom{k}{3} \\
				\vdots              \\
				(-1)^k\tbinom{k}{i} \\
			\end{pmatrix}  \\
			= & d \cdot
			\begin{pmatrix}
				1            \\
				\E[X_A]       \\
				\E[X_A^2]     \\
				\vdots       \\
				\E[X_A^{k-1}] \\
			\end{pmatrix}
			+
			(d-1) \cdot
			\begin{pmatrix}
				1      & 1       & \cdots & 1       \\
				1      & 2       & \cdots & k       \\
				1      & 2^2     & \cdots & k^2     \\
				\vdots & \vdots  & \ddots & \vdots  \\
				1      & 2^{k-1} & \cdots & k^{k-1} \\
			\end{pmatrix}
			\cdot
			\begin{pmatrix}
				(-1)^1\tbinom{k}{1} \\
				(-1)^2\tbinom{k}{2} \\
				(-1)^3\tbinom{k}{3} \\
				\vdots              \\
				(-1)^k\tbinom{k}{i} \\
			\end{pmatrix}= d \cdot
			\begin{pmatrix}
				1/d            \\
				\E[X_A]       \\
				\E[X_A^2]     \\
				\vdots       \\
				\E[X_A^{k-1}] \\
			\end{pmatrix},\\
		\end{align*}
		where the last equation follows from Claim \ref{eq:eq mat}.
		
		On the other hand, by Equation (\ref{eq:moments of X_A}), we know that the right hand side of (\ref{eq:core matrix}) is, 
		$$
		d \cdot
		\begin{pmatrix}
			1/d    & 1/d     & \cdots & 1/d     \\
			1      & 2       & \cdots & k       \\
			1      & 2^2     & \cdots & k^2     \\
			\vdots & \vdots  & \ddots & \vdots  \\
			1      & 2^{k-1} & \cdots & k^{k-1} \\
		\end{pmatrix}
		\begin{pmatrix}
			\p_1    \\
			\p_2    \\
			\p_3    \\
			\vdots \\
			\p_k    \\
		\end{pmatrix}
		= d \cdot
		\begin{pmatrix}
			1/d          \\
			\E[X_A]       \\
			\E[X_A^2]     \\
			\vdots       \\
			\E[X_A^{k-1}] \\
		\end{pmatrix}.
		$$
		Therefore, Equation  (\ref{eq:core matrix}) holds and thus $X_A$ and $X_B$ have $k-1$ proportional moments. This finishes the proof of the Lemma.
	\end{proof}

	Now we construct class $\CA$ as follows: $\CA$ is a class of sequences, and the frequency variable $X_A$ of every sequence $A$ is $\Pr[X_A = i] = p_i$. That is, for every sequence $A$, the fraction of elements that occur $i$ times is exactly $p_i$. We can construct $\CB$ analogously by substituting $p_i$ with $q_i$.
	
	By construction, sequence $A$ is $k$-occurrence-free. Consider the sequence $B$. Suppose that there are $l$ distinct values in $B$, then at least $q_k \cdot l$ values occur $k$ times in $B$, which means that $B$ is at least $\frac{q_k \cdot l}{n}$-far from $k$-occurrence-free. As every value in $B$ occurs in at most $k$ positions, there are at least $\frac{n}{k}$ distinct values, i.e., $l \geq \frac{n}{k}$. Thus, $B$ is at least $\frac{q_k}{k}$-far from $k$-occurrence-free. {According to previous analysis, $A$ and $B$ have $k-1$ proportional moments.} The theorem then follows from Lemma \ref{lem:prop_moments}. 
\end{proof}

\section{A Lower Bound for Testing Subgraph-Freeness} 
In this section, we give the proof of the lower bound on the query complexity for testing subgraph-freeness, i.e., prove Theorem \ref{thm:subgraph-lowerbound}. 

\begin{proof}[Proof of Theorem \ref{thm:subgraph-lowerbound}]
	We give a reduction from the problem of testing $k$-occurrence of a sequence to the problem of testing $H$-freeness in the unidirectional model. That is, given an instance of the former problem, i.e., a length-$n$ sequence $S$ such that each element is promised to occur at most $k$ times, we will construct an instance of the $H$-freeness testing problem, i.e., a directed graph $G$ with $n'=\Theta(n)$ vertices and bounded degree. Then we show that this construction preserves the distances of the properties and any algorithm $\mathcal{A}'$ for testing $H$-freeness in the unidirectional model can be invoked on $G$ to test if $S$ is $k$-occurrence-freeness. In particular, if $\mathcal{A}'$ has query complexity $o(n'^{1-\frac{1}{k}})$, then this implies an algorithm for testing $k$-occurrence-freeness with query complexity $o(n^{1-\frac{1}{k}})$, contradicting to Theorem \ref{thm:kvalue-lowerbound}.

	\vspace{1em}
	\noindent\textbf{Preprocessing the subgraph $H$.} Since $H$ has $k$ source components, we denote these components by $\{C_1,\cdots,C_k\}$. Note that each $C_i$ is a subgraph of $H$. We use $N_{\comp}$ to denote the maximum number of vertices in $\{C_1,C_2,\cdots,C_k\}$, i.e., $N_{\comp}=\mathop{\max}_{i=1,\cdots,k}|V(C_i)|$ where $V(C)$ denotes the vertex set of the graph $C$. 
	We use $C_0$ to denote the subgraph induced by the remainder of vertices in $V(H)$ and we call $C_0$ the center component of $H$. Let $N_{\ctr}=|V(C_0)|=|V(H)|-\sum_{i=1}^{k}|V(C_i)|$. Note that since $C_1,\cdots,C_k$ are source components, by definition, no edge exists between different such components. All the edges leaving $C_i$ (for $i=1,\cdots,k$) are entering $C_0$. We can first decompose $H$ into source components and the center component (e.g., by using Tarjan's algorithm \cite{tarjan1972depth}), index them, and identify all the edges crossing different components in constant time (as the size of $H$ is constant).

	We illustrate such a decomposition of a subgraph $\tilde{H}$ in Figure \ref{fig:H}. Note that $\tilde{H}$ has $3$ source components and $1$ center component (see Figure \ref{fig:component}). It can be partitioned into four parts such  $V(\tilde{C_0})=\{v_2,v_7\}$, $V(\tilde{C_1})=\{v_1\}$, $V(\tilde{C_2})=\{v_3,v_4,v_5\}$, $V(\tilde{C_3})=\{v_6\}$  as follows. In this example, $N_{\comp}=3$, $N_{\ctr}=2$. %

	In the construction of the graph $G$, we will treat each component $C_i$, $1\leq i\leq k$, as a subgraph with $N_{\comp}$ vertices. That is, for each such $C_i$, we add %
	$(N_{\comp}-|V(C_i)|)$ isolated vertices to $C_i$ to obtain a new component $C_i'$ so that $|V(C_i')|=N_{\comp}$. We can reassemble these new components $\{C_1',C_2',\cdots,C_k'\}$ with $C_0$ to obtain a graph $H'$.%
	
	\begin{figure*}[htb]
		\centering
		\begin{tikzpicture}
			\tikzstyle{edge}=[line width=1.3, -latex]
			\tikzstyle{vertex}=[draw,circle,fill=black,inner sep=0pt, minimum width=9pt]
			\def \dist {1.5}
			\node[vertex] (1) at (0,1*\dist) {};	
			\node[vertex] (2) at (0,0) {};
			\node[vertex] (3) at (-1*\dist,-1*\dist) {};
			\node[vertex] (4) at (-2*\dist,-2*\dist) {};
			\node[vertex] (5) at (0,-2*\dist) {};
			\node[vertex] (6) at (2*\dist,-2*\dist) {};
			\node[vertex] (7) at (1*\dist,-1*\dist) {};
			\draw[edge] (1)--(2);
			\draw[edge] (4)--(3);
			\draw[edge] (3)--(2);
			\draw[edge] (6)--(7);
			\draw[edge] (7)--(2);
			\draw[edge] (3)--(5);
			\draw[edge] (5)--(4);
			
			\node at (1)[above=4pt]{$v_{1}$};
			\foreach \x in {2,...,7}\node at (\x)[below=4pt]{$v_{\x}$};
		\end{tikzpicture}
		\caption{A subgraph $\tilde{H}$} %
		\label{fig:H}
	\end{figure*}
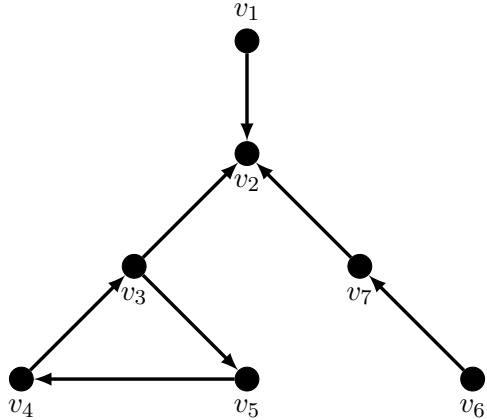
	
	Now we index each vertex of $H'$ by some integer in $\{1,\cdots, N_{\ctr}+k\cdot N_{\comp}\}$ as follows. The index set of $V(C_0')$ is $[1,N_{\ctr}]$, and the index set of $V(C_i')$ is $[N_{\ctr}+(i-1)\cdot N_{\comp} +1, N_{\ctr}+i\cdot N_{\comp}]$, for each $1\leq i\leq k$. Furthermore, for each component $C_i'$ with $0\leq i\leq k$, we sequentially index the vertices using the corresponding index set according to the lexicographical ordering of the vertices in the aforementioned component decomposition.

	\begin{figure*}[htb]
		\centering
		\begin{subfigure}{0.4\linewidth}
			\centering
			\begin{tikzpicture}
				\tikzstyle{edge}=[line width=1.3, -latex]
				\tikzstyle{vertex}=[draw,circle,fill=black,inner sep=0pt, minimum width=9pt]
				\def \dist {1.5}
				\node[vertex] (1) at (0,1*\dist) {};
				\node at (1)[above=4pt]{$v_{1}$};
			\end{tikzpicture}
			\subcaption{source component $\tilde{C_1}$}
		\end{subfigure}%
		\begin{subfigure}{0.4\linewidth}
			\centering
			\begin{tikzpicture}
				\tikzstyle{edge}=[line width=1.3, -latex]
				\tikzstyle{vertex}=[draw,circle,fill=black,inner sep=0pt, minimum width=9pt]
				\def \dist {1.5}
				\node[vertex] (3) at (-1*\dist,-1*\dist) {};
				\node[vertex] (4) at (-2*\dist,-2*\dist) {};
				\node[vertex] (5) at (0,-2*\dist) {};
				\draw[edge] (4)--(3);
				\draw[edge] (3)--(5);
				\draw[edge] (5)--(4);
				
				\foreach \x in {3,4,5}\node at (\x)[below=4pt]{$v_{\x}$};
			\end{tikzpicture}
			\subcaption{source component $\tilde{C_2}$}
		\end{subfigure}%
		
		\begin{subfigure}{0.4\linewidth}
			\centering
			\begin{tikzpicture}
				\tikzstyle{edge}=[line width=1.3, -latex]
				\tikzstyle{vertex}=[draw,circle,fill=black,inner sep=0pt, minimum width=9pt]
				\def \dist {1.5}
				\node[vertex] (6) at (2*\dist,-2*\dist) {};
				
				\foreach \x in {6}\node at (\x)[below=4pt]{$v_{\x}$};
			\end{tikzpicture}
			\subcaption{source component $\tilde{C_3}$}
		\end{subfigure}%
		\begin{subfigure}{0.4\linewidth}
			\centering
			\begin{tikzpicture}
				\tikzstyle{edge}=[line width=1.3, -latex]
				\tikzstyle{vertex}=[draw,circle,fill=black,inner sep=0pt, minimum width=9pt]
				\def \dist {1.5}
				\node[vertex] (2) at (0,0) {};
				\node[vertex] (7) at (1*\dist,-1*\dist) {};
				\draw[edge] (7)--(2);
				
				\foreach \x in {2,7}\node at (\x)[below=4pt]{$v_{\x}$};
			\end{tikzpicture}
			\subcaption{center component $\tilde{C_0}$}
		\end{subfigure}%
		\caption{Decomposing $\tilde{H}$ into $3$ source components and $1$ center component}
		\label{fig:component}
	\end{figure*}
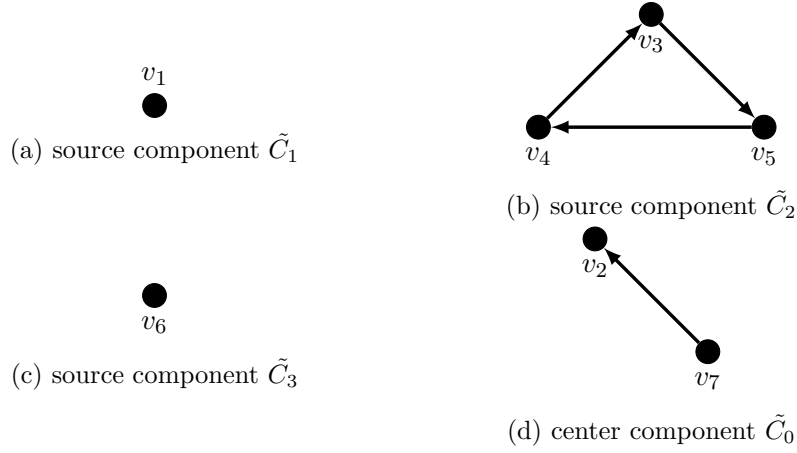
	
	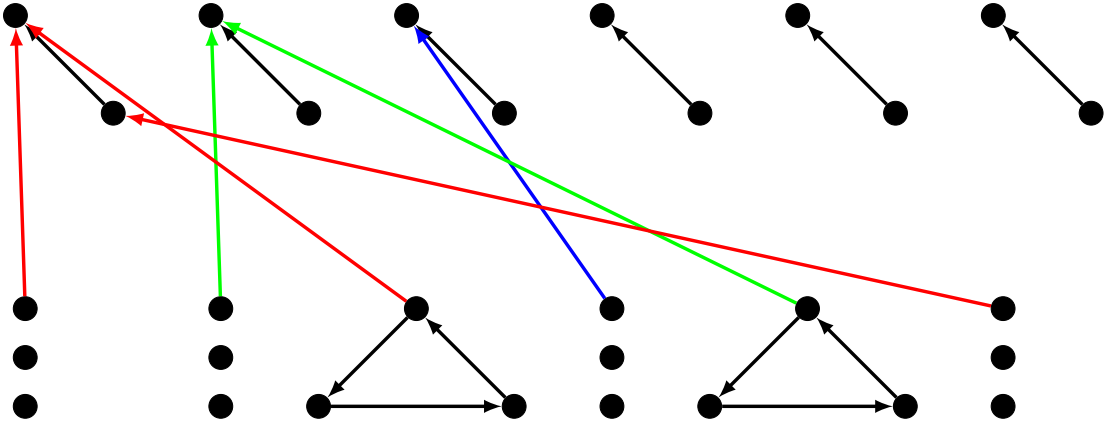
\begin{figure*}[htb]
		\begin{tikzpicture}
			\tikzstyle{edge}=[line width=1.3, -latex]
			\tikzstyle{vertex}=[draw,circle,fill=black,inner sep=0pt, minimum width=9pt]
			\def \dist {1.3}

			\node[vertex] (1) at (-12*\dist,0*\dist) {};
			\node[vertex] (2) at (-11*\dist,-1*\dist) {};
			\draw[edge] (2)--(1);
			\node[vertex] (3) at (-10*\dist,0*\dist) {};
			\node[vertex] (4) at (-9*\dist,-1*\dist) {};
			\draw[edge] (4)--(3);
			\node[vertex] (5) at (-8*\dist,0*\dist) {};
			\node[vertex] (6) at (-7*\dist,-1*\dist) {};
			\draw[edge] (6)--(5);
			\node[vertex] (7) at (-6*\dist,0*\dist) {};
			\node[vertex] (8) at (-5*\dist,-1*\dist) {};
			\draw[edge] (8)--(7);
			\node[vertex] (9) at (-4*\dist,0*\dist) {};
			\node[vertex] (10) at (-3*\dist,-1*\dist) {};
			\draw[edge] (10)--(9);
			\node[vertex] (11) at (-2*\dist,0*\dist) {};
			\node[vertex] (12) at (-1*\dist,-1*\dist) {};
			\draw[edge] (12)--(11);

			\node[vertex] (13) at (-11.9*\dist,-3*\dist) {};
			
			\node[vertex] (14) at (-9.9*\dist,-3*\dist) {};
			
			\node[vertex] (15) at (-7.9*\dist,-3*\dist) {};
			\node[vertex] (16) at (-6.9*\dist,-4*\dist) {};
			\node[vertex] (17) at (-8.9*\dist,-4*\dist) {};
			\draw[edge] (16)--(15);
			\draw[edge] (15)--(17);
			\draw[edge] (17)--(16);
			
			\node[vertex] (18) at (-5.9*\dist,-3*\dist) {};
			
			\node[vertex] (19) at (-3.9*\dist,-3*\dist) {};
			\node[vertex] (20) at (-2.9*\dist,-4*\dist) {};
			\node[vertex] (21) at (-4.9*\dist,-4*\dist) {};
			\draw[edge] (20)--(19);
			\draw[edge] (19)--(21);
			\draw[edge] (21)--(20);
			
			\node[vertex] (22) at (-1.9*\dist,-3*\dist) {};

			\node[vertex] (23) at (-11.9*\dist,-3.5*\dist) {};
			\node[vertex] (23) at (-11.9*\dist,-4*\dist) {};

			\node[vertex] (23) at (-9.9*\dist,-3.5*\dist) {};
			\node[vertex] (23) at (-9.9*\dist,-4*\dist) {};

			\node[vertex] (23) at (-5.9*\dist,-3.5*\dist) {};
			\node[vertex] (23) at (-5.9*\dist,-4*\dist) {};

			\node[vertex] (23) at (-1.9*\dist,-3.5*\dist) {};
			\node[vertex] (23) at (-1.9*\dist,-4*\dist) {};

			\draw[edge][red] (13)--(1);
			\draw[edge][green] (14)--(3);
			\draw[edge][red] (15)--(1);
			\draw[edge][blue] (18)--(5);
			\draw[edge][green] (19)--(3);
			\draw[edge][red] (22)--(2);
			
		\end{tikzpicture}
		\caption{Constructing $\tilde{G}$ from the sequence $\tilde{S}$ and the decomposition of $\tilde{H}$}
		\label{fig:G}
	\end{figure*}

	\vspace{1em}
	Now we describe the reduction. %
	Given a length-$n$ sequence $S$, the directed graph $G=(V,E)$ can be constructed as follows. We first add $n$ disjoint copies of the subgraph $C_0$ to $G$. Then we will add $n$ copies of source components and add some edges from source components to some copy of $C_0$ constructed before. That is,  each element in the sequence corresponds to a source component. %
	Note that there are no edges between different copies of source components. The offline construction is formally described as follows.
	
	\vspace{1em}
	\noindent\textbf{Vertex set and vertex indices.} We index vertices in $G$ from $1$ to $n\cdot(N_{\ctr}+N_{\comp})$. The vertex set is decomposed into two parts: the \emph{center part} and the \emph{source part}. More precisely, the source part contains $n$ potential source components with vertex indices  from $1$ to $n\cdot N_{\comp}$, and the center part contains $n$ disjoint copies of the center component $C_0$ with vertex indices from $n\cdot N_{\comp}+1$ to $n\cdot (N_{\comp}+N_{\ctr})$. Furthermore, the vertices in the $i$-th copy of the source component are indexed from $(i-1)\cdot N_{\comp}+1$ to $i\cdot N_{\comp}$, while the vertices of the $j$-th copy of the center component are indexed from $n\cdot N_{\comp} + (j-1) \cdot N_{\ctr} + 1$ to $n\cdot N_{\comp} + j \cdot N_{\ctr} $. 
	
	\vspace{1em}
	\noindent\textbf{Adding components and edges.}  Add $n$ disjoint copies of $C_0$ to $G$. Initialize a size-$n$ array $T$ such that $T_a=0$ for each $1\leq a\leq n$. For each $a = 1,2,\cdots,n$: 
	\begin{enumerate}
		\item let $b$ be the value (or element) of $S$ at position $a$
		\item If $b$ is a new value that algorithm sees for the first time,  define an array $R_b=\{1,2,\cdots,k\}$.  %
		\item %
		Uniformly sample a number $t$ from $R_b$. %
		Add an copy of $C'_t$. Ignoring isolated vertices in $C_t'$, add edges between this copy of $C_t'$ and the $b$-th copy of $C_0$ in the same way as the connections between their counterparts in the subgraph $H$. %
		Delete $t$ from $R_b$. Set $T_a=t$, i.e., the $a$-th position of $S$ is mapped to a source component $C_t'$.
	\end{enumerate}
	
	Note that by construction, the graph $G$ is $d$-bounded, and its maximum (in- or out-) degree the same as the maximum (in- or out-) degree of $H$.

	We give an illustration of the above construction in Figure \ref{fig:G}. Given a sequence $\tilde{S}=\{1,2,1,3,2,1\}$, and a subgraph $\tilde{H}$ as shown in Figure \ref{fig:H}. The graph $\tilde{G}$ from the above reduction is shown Figure \ref{fig:G}. In this figure, edges of the same color  correspond to positions of the same value (or element) in $\tilde{S}$. For example, the $3$ red edges correspond to the $3$ occurrences of value $1$. Together with the corresponding source and center components, these red edges form an copy of $\tilde{H}$.

	\vspace{1em}
	\noindent\textbf{Construction on the fly.} We show that the above construction of $G$ can be done on the fly and each query to $G$ can be answered by querying at most $1$ position in $S$. More precisely, let $\mathcal{A}'$ be an algorithm for testing $H$-freeness. When $\mathcal{A}'$ queries the $i$-th outgoing neighbor of a vertex $v$, we consider the following cases. 
	
	If $v>n\cdot N_{\comp}$, then $v$ belongs to a copy of $C_0$, then we do not need to query sequence $S$, and  we can simply locate the vertex $v'=(v-n\cdot N_{\comp}) \bmod{N_{\ctr}}$ in $C_0$. And by our index in $H'$, we know the corresponding vertex index in $H'$ is also $v'$. Then we can check the $i$-th neighbor of $v'$ in $H'$, denoted by $v''$. Thus we just return $v-v'+v''$. %
	
	If $1\leq v \leq n\cdot N_{\comp}$, then $v$ belongs to a copy of some source component. Calculate $a = \lceil{v/N_{\comp}}\rceil$ and query the $a$-th position of $S$. Let $b$ denote the query answer. If $T_a=0$, which means that this element is queried for the first time, uniformly sample a type $t$ from the rest of types $R_b$ for value $b$, and update $T_a=t$; otherwise simply set $t=T_a$. Note that $R$ and $T$ are maintained as described in the construction. Then calculate $v'=v \bmod{N_{\comp}}$. Now we know that the queried vertex $v$ corresponds to the $v'$-th vertex in a $C_t'$ component, which is adjacent to the $b$-th copy of $C_0$. We can look up vertex $N_{\ctr} + (t-1)\cdot N_{\comp} + v'$ in $H'$, which is isomorphic to vertex $v$ in $G$. We use $v''$ to denote the $i$-th neighbor of $N_{\ctr} + (t-1)\cdot N_{\comp} + v'$ in $H'$. If $v''$ belongs to the $C_t'$ part in $H$, we just return $v-v'+v''$. Otherwise, if $v''$ belongs to a $C_0$ part, we return $n\cdot N_{\comp} + (b-1)\cdot N_{\ctr} + v''$.

	Thus, any query for a vertex $v$ with $v>n\cdot N_{\comp}$ can be answered without querying $S$; query for a vertex $v$ with $1\leq v \leq n\cdot N_{\comp}$ can be answered by making one query to $S$. %
	
	Note that our construction  generates a graph $G$ from a distribution $\mathcal{D}=\{G_1,G_2,\cdots\}$. We will show that if $S$ is $k$-occurrence-free, then any graph from $\mathcal{D}$ is $H$-free; if $S$ is far from being $k$-occurrence-free, then every graph in $\mathcal{D}$ is far from $H$-freeness.

	\vspace{1em}
	\noindent\textbf{Preserving the distances.} Note that in the above construction, if there exists some value occurring $k$ times in $S$, then these $k$ occurrences of the same value results in $k$ different source components covering $\{C_1,C_2,\cdots,C_k\}$, and they are adjacent to the same center. That is, each element occurring $k$ times in the sequence result in an occurrence of $H$ in $G$. %
	For each element occurring less than $k$ times, the center corresponding to this value will be adjacent to less than $k$ source components, which in turn implies that $H$ does not occur in this case. We mention that the auxiliary isolated vertices also do not contribute to any  occurrence of $H$.
	
	Thus, if $S$ is $k$-occurrence-free, then there can not be any occurrence of $H$, and thus $G$ must be $H$-free. If $S$ is $\varepsilon$-far from being $k$-occurrence-free, then there will be at least $\varepsilon n$ occurrences of $H$ in $G$. This implies that $G$ is at least $\varepsilon'$-far from $H$-freeness, for $\varepsilon'=\frac{\varepsilon n}{d(N_{\ctr}+N_{\comp})n} = \frac{\varepsilon}{d(N_{\ctr}+N_{\comp})}$. 
	
	\vspace{1em}
	\noindent\textbf{Putting things together.} 
	Let $\mathcal{A}'$ be an algorithm for testing $H$-freeness with proximity parameter $\varepsilon=\Theta_{k,d}(1)$. %
	Suppose that the query complexity is $o(n'^{1-\frac{1}{k}})$ on an $n'$-vertex digraph. Now we invoke the algorithm $\mathcal{A}'$ on the graph $G$ that was constructed as before. As we have seen, each query in $G$ can be answered by making at most $1$ query to the sequence $S$. Furthermore, if $S$ is $k$-occurrence-free, then $G$ is $H$-free and if $S$ is $\varepsilon$-far from being $k$-occurrence-free, then $G$ is $\varepsilon'$-far from $H$-free, for $\varepsilon'=\frac{\varepsilon}{d\cdot (N_{\ctr}+N_{\comp})}=\Theta_{k,d}(1)$. Thus, the algorithm $\mathcal{A'}$, together with the construction, also solves the problem of testing $k$-occurrence-freeness with $o(n^{1-\frac{1}{k}})$ queries, which contradicts Theorem \ref{thm:kvalue-lowerbound}. Thus, the query complexity of $\mathcal{A}'$ is $\Omega(n^{1-\frac{1}{k}})$. This finishes the proof of the theorem.

\end{proof}

\bibliographystyle{alpha}
\bibliography{testing}

\end{document}